 \theoremstyle{remark}
 \newtheorem{example}{Example}
\theoremstyle{plain}
\newtheorem{lemma}{Lemma}
\newtheorem{observation}{Observation}
\newtheorem{proposition}{Proposition}
\renewcommand{\qedsymbol}{\textsquare}
\renewcommand{\epsilon}{\varepsilon}
\newcommandx{\nageeb}[2][1=]{\todo[linecolor=blue,backgroundcolor=blue!25,bordercolor=blue,#1]{#2}}
\newcommandx{\greg}[2][1=]{\todo[linecolor=red,backgroundcolor=red!25,bordercolor=red,#1]{#2}}  
\newcommandx{\shosh}[2][1=]{\todo[linecolor=green,backgroundcolor=green!25,bordercolor=green,#1]{#2}}
\crefname{assumption}{assumption}{assumptions}
\newcommand{\messages}{\mathcal M}
\newcommand{\partition}{\mathcal P}
\newcommand{\lowv}{\underline{v}}
\newcommand{\highv}{\overline{v}}
\newcommand{\optprice}{p^*}
\begin{document}

\begin{titlepage}

\title{Voluntary Disclosure and Personalized Pricing\thanks{We thank Nima Haghpanah, Navin Kartik, Vijay Krishna, Doron Ravid, Nikhil Vellodi, Jidong Zhou, and various conferences and seminars audiences. We are grateful to Microsoft Research for hosting Ali and Vasserman. We thank Xiao Lin for excellent research assistance.}}
\author{
S. Nageeb Ali\thanks{Pennsylvania State University. Email: \href{mailto:nageeb@psu.edu}{nageeb@psu.edu}.}
\and
Greg Lewis\thanks{Microsoft Research. Email: \href{glewis@microsoft.com}{glewis@microsoft.com}}
\and 
Shoshana Vasserman\thanks{Stanford Graduate School of Business.  Email: \href{mailto:svass@stanford.edu}{svass@stanford.edu}.}
}
\date{August 15, 2020}
\maketitle

\begin{abstract}
Central to privacy concerns is that firms may use consumer data to price discriminate. A common policy response is that consumers should be given control over which firms access their data and how. Since firms learn about a consumer's preferences based on the data seen and the consumer's disclosure choices, the equilibrium implications of consumer control are unclear. We study whether such measures improve consumer welfare in monopolistic and competitive markets. We find that consumer control can improve consumer welfare relative to both perfect price discrimination and no personalized pricing. First, consumers can use disclosure to amplify competitive forces. Second, consumers can disclose information to induce even a monopolist to lower prices. Whether consumer control improves welfare depends on the disclosure technology and market competitiveness. Simple disclosure technologies suffice in competitive markets. When facing a monopolist, a consumer needs partial disclosure possibilities to obtain any welfare gains.  

\noindent JEL codes: D4, D8\medskip

\noindent Keywords: privacy, disclosure, segmentation, unraveling. 

\end{abstract}
\thispagestyle{empty} 

\end{titlepage}
\clearpage
	\setcounter{tocdepth}{2}
	\tableofcontents
	\thispagestyle{empty}
\clearpage

\setcounter{page}{1}

\maketitle

\section{Introduction}\label{Section-Introduction}

Consumer data is valuable information.  Sellers want to know what consumers are interested in, which products they are aware of, and how much they might be willing to pay for them.
Answers to these questions can be collected to an unprecedented degree through online activity. The resulting information can then be used to personalize both prices and product offerings by online sellers.
But many consumer advocates have argued that consumer data is inherently \emph{private}, and that  consumers should be able to choose whether or not to share it with companies. This tension has led regulators to pose a question that made little sense in the physical economy: to what degree are consumers in the online economy harmed by the release of their private information to sellers, and should consumers be able to control which information they share?

These questions are at the forefront of an ongoing international debate that has precipitated action in both the public and private sectors.
Regulators concerned that personalized pricing will harm consumers have focused on the importance of consumer consent, passing wide-reaching legislation on data storage and tracking. A prominent example, the General Data Protection Regulation (GDPR) passed by the European Union, requires firms to obtain consent from consumers before collecting and processing their personal data.\footnote{Starting on January 1, 2020, California has enforced the California Consumer Privacy Act (CCPA), which has similar provisions to the GDPR.} In the United States, the Federal Trade Commission recommends that ``best practices include...giving consumers greater control over the collection and use of their personal data...'' \citep{FTC2012}. Meanwhile, private sector firms have responded to consumer demand for privacy by designing commercial products and brands that are specifically developed to limit tracking.\footnote{For example, Apple recently added a feature to its Safari browser that limits the ways in which its user's activities are tracked by third parties \citep{Hern2018}.} 

Against this backdrop, we study the market implications of consumer consent and control. 
We investigate what happens when consumers fully control their data---not only whether they are tracked, but what specific information is disclosed to firms.
We assume that the data is encoded in a verifiable form that consumers can partially or fully disclose to firms.
In equilibrium, firms draw inferences about a consumer's preferences based on both what she shares and what she chooses not to share.
Our motivating question is: \emph{when consumers fully control their information, are they hurt or helped by personalized pricing?}

We pose this question in an environment in which products cannot be personalized, and so there is no match value from data.
A classical intuition might suggest that consumers would not benefit from being permitted to voluntarily disclose information. Because the market's \emph{equilibrium inferences} are based both on information that is disclosed and what is not being disclosed, giving consumers the ability to separate themselves may be self-defeating, as seen in the unraveling equilibria of \cite{grossman:81-JLE} and \cite{milgrom1981good}. Contrary to that intuition, we find that the combination of personalized pricing and consumer control is actually beneficial to consumers in both monopolistic and competitive markets.
We construct equilibria of the consumers' disclosure game in which sharing data weakly increases consumer surplus for \emph{every} consumer type, relative to the benchmark of no personalized pricing.

Two key ideas drive these results.
First, voluntary disclosure and personalized pricing together amplify competition between firms.
Nearly indifferent consumers benefit from the ability to credibly communicate their flexibility, intensifying competition for their business, while consumers with a strong preference for the product of one particular firm can hide this preference. Although firms interpret this non-disclosure as a signal of strong preferences, the resulting prices are still lower than without disclosure.
Second, even in the absence of competition, consumers can benefit from sending coarse signals that pool their valuations.
These pools are constructed in such a way that a monopolist finds it optimal to sell to every type within each pool. Therefore every consumer within a pool pays the price of the consumer type that has the lowest valuation in that pool. Disclosures lead to price discounts that benefit every consumer type. 
The take-away is that offering consumers control---and possibly building tools that coordinate the sharing of data for consumer benefit---may make personalized pricing attractive \emph{even in the absence of better matching.} 

\paragraph{A Preview:}Although we are interested in both monopolistic and competitive markets, we start with the problem of a monopolist choosing what price to charge a consumer whose valuation he does not know. We augment that classical problem with a ``verifiable'' disclosure game, as in \cite{grossman:81-JLE} and \cite{milgrom1981good}: before the monopolist sets his price, a consumer chooses what ``evidence'' or hard information to disclose about her valuation. We study both those disclosure environments in which evidence is \textbf{simple}, where a consumer can either speak ``the whole truth'' (reveal her type) or say nothing at all, as well as those in which evidence is \textbf{rich}, where a consumer can disclose facts about her type without having to reveal it completely.\footnote{We borrow this terminology from \cite{hagenbach2017}.} 
In our game, the consumer first observes her type and then chooses a message to disclose to the firm from the set of messages available to her. The firm then quotes a price, and the consumer chooses whether to buy the product at that price. The firm and the consumer cannot commit to their strategic choices. 

Our first conclusion for the monopolistic environment (\Cref{Proposition-Certification}) is that simple evidence never benefits the consumer and potentially hurts her: there is no equilibrium in which \emph{any type} of the consumer is better off relative to a setting without personalized pricing. Moreover, there are equilibria in which all consumer types are worse off, such as an unraveling equilibrium in which the monopolist extracts all surplus. 

Our second conclusion is that once the evidence structure is rich---where consumers can partially disclose information without revealing all of it---all consumer types can benefit from disclosing information. 
\Cref{Proposition-MonopolistConsumerSegmentation} constructs an equilibrium that improves the consumer surplus for almost all consumer types without reducing the surplus of any consumer type. In this equilibrium, all types are partitioned into segments on the basis of their willingness to pay, and trading is fully efficient. Because the consumer cannot commit to her disclosure strategy, every consumer type must find that her equilibrium message induces a weakly lower price than that induced by any other feasible message; our segmentation guarantees this property. Moreover, our segmentation ensures that for each segment, the monopolist's optimal price is the lowest willingness to pay in that segment. This greedy algorithm identifies a Pareto-improving equilibrium segmentation for every distribution of consumer types and identifies the equilibrium segmentation that maximizes ex ante consumer surplus for a class of distributions. 

We show that stronger conclusions emerge in competitive markets. We consider general competitive markets with differentiated products, in which the consumer demands a single unit. The differentiation may be purely horizontal or have both horizontal and vertical components. The firms are uncertain of the consumer's preferences and the consumer can disclose information about her preferences to the firms, who then simultaneously make price offers to her. As before, we compare the outcomes when consumers can disclose information --- either via simple or rich evidence --- against a benchmark model in which there is no personalized pricing. Here, voluntary disclosure and personalized pricing is particularly beneficial to consumer surplus because of a new economic force: \emph{information can be selectively disclosed to amplify competition}.

More specifically, if the distribution of consumer preferences is symmetric and log-concave, then an equilibrium in the game with simple evidence (where the consumer's disclosure strategies are all-or-nothing) improves consumer welfare for every type relative to the no-personalized-pricing benchmark. Rich evidence allows one to do even better by using a greedy segmentation strategy similar to that used in the monopolist problem. But in a competitive market, rich evidence is unnecessary for these Pareto gains, and simple evidence suffices.

\paragraph{Implications:} 
The goal of our exercise is to use a simple model to inform the ongoing regulatory debate on consumer consent. From this stylized model, we draw two broad lessons for policy. First, voluntary disclosure can facilitate price concessions in both monopolistic and competitive markets. Thus, there is something missing in the view that tracking involves a tradeoff between the benefits of personalized products and the costs of personalized prices. Even without the benefits of product personalization, a consumer can benefit from personalized pricing when she has control. Disclosure generates discounts and amplifies competition.

The second lesson is that whether a track / do-not-track regime (as evoked by the GDPR) suffices to give consumers \emph{useful} control over their data depends on the competitiveness of the marketplace. In a monopolistic environment, richer forms of data sharing are necessary for consumers to benefit: consumer control must involve a choice not only of \textit{whether} to share information but also of \textit{how much} information to share. By contrast, in competitive markets, the choice of whether to share information is sufficient so long as consumers can share information with some firms and conceal it from others. Because both monopolistic and competitive environments feature multiple equilibria, some degree of coordination is needed to orchestrate which information to group together. 

While online communications between consumers and sellers are not yet as sophisticated as is envisioned in our rich-evidence setup, an important element of the digital economy is its increasing ability to verify information \citep{goldfarbtucker}. These advances suggest that it may be technologically feasible for consumers to use intermediaries or platforms to verifiably disclose that their preferences or characteristics (e.g., income, age, address etc.) lie within a certain range without having to forfeit all of their information to online sellers.  Regulatory agencies may seek to constrain the ways in which such intermediaries confer information between buyers and sellers, or to provide information verification services themselves. Our work provides suggestions for what an effective intermediary of this form might look like.

\paragraph{Relationship to Literature:}
Our work belongs to the growing literatures on privacy, information, and their implications for markets; see \cite*{acquisti2016economics} and \cite{bergemann2019survey} for recent surveys. We view our paper as making two contributions. First, it formulates and investigates the economic implications of giving consumers control over their data through the lens of voluntary disclosure. 
Second, our analysis shows that whether consumers benefit from controlling their information depends both on the technology of disclosure and the level of market competition. 

We combine classical models of monopolistic and competitive pricing with the now classical study of verifiable disclosure. Unlike the first analyses of verifiable disclosure \citep{grossman:81-JLE,milgrom1981good}, unraveling is not the unique equilibrium outcome of the markets that we study. An observation central to our results is that a firm's optimal price need not be monotone in its beliefs about the consumer's willingness to pay. This observation permits us to pool high and low types without giving the low type an incentive to separate.\footnote{Prior analyses have highlighted other reasons for why markets may not unravel, in particular (i) uncertainty about whether the sender has evidence \citep{dye1985disclosure,shin1994burden}, (ii) disclosure costs \citep{jovanovic1982truthful,verrecchia1983discretionary}, or (iii) the possibility for receivers to be naive \citep{hagenbach2017}.} 

For the case of a monopolistic seller, two papers in this verifiable disclosure literature touch upon closely related issues. \cite{sher2015price} study monopolistic price discrimination in which the seller commits to a schedule of evidence-contingent prices. By contrast, we assume that sellers cannot commit and instead set prices that best respond to the evidence that has been presented. 
In independent and prior work, \cite{pram2020} studies when a consumer can disclose rich evidence to obtain Pareto gains from a monopolistic seller.\footnote{We became aware of his work subsequent to our first draft.} His paper and ours share some overlap in studying the rich-evidence case of a monopolist facing a consumer that has private values, but we go in different directions from that benchmark. Pram's interest is in adverse-selection settings where the payoffs for the monopolist and the buyer depend on the buyer's type, and he elegantly characterizes necessary and sufficient conditions for the existence of a Pareto improving equilibrium.\footnote{Other papers that study the role of certification in adverse selection are \cite{stahl2017certification} and \cite*{glode2018voluntary}.} By contrast, our motivation is to understand the interaction of disclosure technologies and market competition, which is why we study both simple and rich evidence in both monopolistic and competitive markets. Our message of how simple evidence amplifies competitive forces but richer forms of evidence are needed with a monopolist does not feature in his work. We view our papers to be highly complementary shedding light on different aspects of consumer control.

This verifiable-disclosure approach to consumer control complements two other ways for consumers to share information in markets. One way is through information design, where an intermediary that knows the consumer's type commits to a segmentation strategy. With a monopolistic seller, that intermediary can achieve payoffs characterized by \cite*{bergemann2015limits}. 
The other way is for the consumer to use cheap talk. \cite{hidir2018personalization} show that if the product can be customized to the consumer's tastes, cheap talk can improve matching without resulting in the monopolist capturing the entire surplus.
Verifiable disclosure offers a potentially useful middle ground between information design and cheap talk. It is relevant when a consumer can use an intermediary to verify information about her type in her communication to firms without forfeiting control over the disclosure of that information. As the evolving digital economy balances an increasing ability to verify information cryptographically with public pressure for individual privacy, we believe it to be useful to understand the possible implications when consumers can communicate verifiable information to firms. 
 
The work discussed above involves a single seller but much of our interest is in how voluntary disclosure amplifies competition in markets with differentiated goods. Related to this idea is the innovative work of \cite{thissevives1988}. Using a model of Bertrand duopoly, they study whether firms choose to personalize prices. In their model, the consumer's type is commonly known and they show that the unique equilibrium involves personalized pricing even though the firms' joint profits would be higher if they could commit to uniform pricing. The key distinction with our work is that all of the action in their model --- and in the subsequent literature --- is on the side of the firms: taking consumers as passive, these papers study whether firms personalize prices when they know (or can learn) consumer types.\footnote{See \cite{armstrong_2006} for a survey. We thank Jidong Zhou for drawing our attention to this work. }  
 By contrast, in our model, the consumer actively chooses whether to disclose information and it is her voluntary disclosure that facilitates personalized pricing. Moreover, the ability to pool with other types is necessary for every consumer type to benefit from personalized pricing; otherwise, extreme types are worse off from personalized pricing. Thus, the welfare gains that we study would not emerge in a model in which consumer types are commonly known. 

 Several papers share our interest in the role of information in competitive markets with differentiated products. \cite{elliott2019market} show how an information-designer can segment the market so that consumers are allocated efficiently while nevertheless guaranteeing that consumers obtain no surplus. \cite{armstrongzhou2019} study firm-optimal and consumer-optimal information structures for a consumer that does not know her tastes, and show that the consumer-optimal signal may involve learning a little so as to amplify price competition. Thinking about network information, \cite{fainmesser2015pricing,fainmesser2019pricing} study monopolistic and competitive price discrimination based on how consumers influence each other. 
 
Less closely related, a prior literature studies how a consumer may distort her behavior in dynamic settings if firms draw inferences about her tastes from her past choices. This literature has investigated when firms prefer to commit to not personalize prices, when consumers would like to remain anonymous, and how consumers may distort their actions \citep*{taylor2004consumer,villasboas:04,acquisti2005conditioning,calzolari2006optimality,conitzer2012hide}. \cite{bonatti/cisternas} study the welfare properties of aggregating consumers' past purchasing histories into scores.
Our analysis complements this agenda by studying how a consumer fares from directly controlling the flow of information rather than distorting her behavior to influence the market's perception of her tastes. 
 
 In thinking about consumer control, we abstract from issues raised in the study of data markets. Several recent papers \citep*{choi2019privacy,acemoglu2019too,bergemann2019economics} model how the data of some consumers may be predictive of others but each consumer may not internalize this externality, inducing excessive data-sharing. \cite{liang2020data} study how data linkages across individuals affect their incentives for exerting effort. \cite{jones2019nonrivalry} show that, because data is non-rival, there are social gains from multiple firms using the same data simultaneously, and therefore it is better to let consumers own and trade data. \cite*{fainmesser2020digital} study issues of data collection and data protection, and tradeoffs thereof.
 
\paragraph{Outline:} Using  examples, we illustrate our main results for both monopolistic and competitive markets in \Cref{Section-Example}. \cref{Section-MonopolistModel} offers the general analysis for the monopolist setting.  In \cref{Section-Competition}, we analyze competitive markets with product differentiation. Therein, we study both Bertrand duopoly with horizontal differentiation as well as a general model with $n\geq 2$ firms and general product differentiation. \cref{Section-Conclusion} concludes.

\section{Two Examples}\label{Section-Example}
 
\subsection{A Monopolist}\label{Section-ExampleMonopolist}
A monopolist (``he'') sells a good to a single consumer (``she''), who demands a single unit. The consumer's value for that good is $v$, which is drawn uniformly from $[0,1]$. If the consumer purchases the good from the monopolist at price $p$, her payoff is $v-p$ and the monopolist's payoff is $p$; otherwise, each party receives a payoff of $0$. The consumer knows her valuation for the good and the monopolist does not. In this setting, and without any disclosure, the monopolist optimally posts a uniform price of $\frac{1}{2}$, which induces an ex interim consumer surplus of $\max\{v-\frac{1}{2},0\}$, and a producer surplus of $\frac{1}{4}$. 

We augment this standard pricing problem with voluntary disclosure on the part of the consumer. After observing her value, the consumer chooses a message $m$ from the set of feasible messages for her. 
The set of \emph{all} feasible messages is $\messages\equiv \left\{[a,b]:0\leq a \leq b \leq 1\right\}$, and we interpret a message $[a,b]$ as \emph{``My type is in the set $[a,b]$.''} When a consumer's type is $v$, the set of messages that she can send is $M(v)\subseteq \messages$. The evidence structure is represented by the correspondence $M:[0,1]\rightrightarrows \messages$. 

Here is the timeline for the game: first, the consumer observes her type $v$ and chooses a message $m$ from $M(v)$. The monopolist then observes the message and chooses a price $p\geq 0$. The consumer then chooses whether to purchase the good. Each party behaves sequentially rationally: we study Perfect Bayesian Equilibria (henceforth PBE) of this game. Our interest is in the implications of this model for simple and rich evidence structures, described below.

\paragraph{Simple evidence:}
An evidence structure is \textbf{simple} if for every $v$, $M(v)=\{\{v\},[0,1]\}$. In other words, each type $v$ can either fully reveal her type using the message $m=\{v\}$ (which is unavailable to every other type), or not disclose anything at all, using the message $m=[0,1]$ (which is available to every type). Such an evidence structure offers a stylized model for the dichotomy between ``track'' and ``do-not-track'': a consumer who opts into tracking will have all of her digital footprint observed by the buyer, whereas do-not-track obscures it entirely.

In this game, there exists an equilibrium in which every type $v$ fully reveals itself using the message $m=\{v\}$, and the monopolist extracts all surplus on the equilibrium path. Off-path, if the consumer sends the non-disclosure message, $m=[0,1]$, the monopolist believes that $v=1$ with probability $1$, and charges a price of $1$. In this equilibrium, all consumers are hurt by the possibility of voluntary disclosure and personalized pricing but the monopolist benefits from it.

But this is not the only equilibrium: there is also one in which every type sends the {non-disclosure} message $m=[0,1]$, and the monopolist charges a price of $\frac{1}{2}$. No consumer type wishes to deviate because revealing her true type results in a payoff of $0$. Here, both consumer and producer surplus are exactly as in the world without personalized pricing. In fact, there are an uncountable number of equilibria. But 
\emph{none} of them improve upon the benchmark of no-personalized-pricing from the perspective of \emph{any} consumer type.
\begin{observation}\label{Observation-Simple}
	With simple evidence, across all equilibria, the consumer's interim payoff is no more than her payoff without personalized pricing, namely $\max \{v-1/2,0\}$. 
\end{observation}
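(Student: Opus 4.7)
The plan is to pin down enough structure of any PBE to force the equilibrium pool price to be at least $1/2$. In any PBE, let $S \subseteq [0,1]$ denote the set of types who send the non-disclosure message $m=[0,1]$, and let $p^*$ be the monopolist's equilibrium response to that message. I would first note that if type $v$ sends $\{v\}$, the monopolist's posterior is a point mass at $v$, so sequential rationality forces him to charge exactly $v$, leaving the consumer with payoff $0$. Hence sequential rationality for the consumer requires that any $v \notin S$ satisfies $\max\{v - p^*, 0\} \leq 0$, i.e., $v \leq p^*$; equivalently, $(p^*, 1] \subseteq S$ up to a null set.

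I would then use this inclusion to bound $p^*$ from below. Because $(p^*, 1] \subseteq S$, for every $p \geq p^*$ the posterior measure of the set of pool types with valuation above $p$ equals the prior measure $1-p$, so the monopolist's pool profit at such $p$ is $p(1-p)$, uniquely maximized on $[0,1]$ at $p = 1/2$. For any $p < p^*$ the pool profit is at most $p \cdot (1-p)$ as well (since $S \subseteq [0,1]$), which is strictly less than $1/4$ whenever $p < 1/2$. If $p^* < 1/2$, then the price $p = 1/2$ lies in $[p^*, 1]$ and yields profit $1/4$, strictly exceeding the monopolist's payoff at $p^*$ and at any other $p < p^*$; this contradicts $p^*$ being a best response. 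Hence $p^* \geq 1/2$.

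To finish, for $v \in S$ the consumer's interim payoff is $\max\{v - p^*, 0\} \leq \max\{v - 1/2, 0\}$, and for $v \notin S$ her payoff is $0 \leq \max\{v - 1/2, 0\}$, establishing the observation.

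The main obstacle, which this approach sidesteps rather than solves head-on, is the circular dependence between $S$ and $p^*$: the consumer's incentive constraint shapes $S$ through $p^*$, while the monopolist's best response determines $p^*$ through $S$. Rather than characterizing all equilibrium sets $S$, the proof uses only the one-sided inclusion $(p^*, 1] \subseteq S$ implied by sequential rationality of revealing types; this already makes the monopolist's restricted problem over prices $p \geq p^*$ identical to the uniform-prior monopoly problem and hence forces the textbook bound $p^* \geq 1/2$.
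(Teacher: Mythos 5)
Your proof is correct and follows essentially the same route as the paper's: you use consumer sequential rationality to show the non-disclosure pool must contain all types above the pool price, and then derive a contradiction from the monopolist's ability to deviate to the uniform-prior optimal price $1/2$, whose profit within the pool coincides with the unconditional profit $1/4$. This is exactly the argument the paper sketches for \Cref{Observation-Simple} and formalizes for general distributions in the proof of \Cref{Proposition-Certification}.
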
  

\begin{figure}[t]
\centering
{\scriptsize
	\begin{tikzpicture}[scale=0.45]
		\draw[fill=SkyBlue,fill opacity=1, ultra thick] (0,7) rectangle (1,10); 
		\draw[pattern=crosshatch dots, pattern color=orange, ultra thick] (0,3) rectangle (1,7); 
		\draw[ultra thick] (0,0) rectangle (1,3); 
		\node at (1.6,0) {$0$};
		\node at (1.6,10) {$1$};
		\draw (1, 8.7)-- (1.3, 8.7);
		\node at (1.6, 8.7) {$v$};
		\draw (1, 5)--(1.3,5);
		\node  at (1.9, 5) {$p_{ND}$};
		\draw [->, thick] (1.6,8.4) to[bend left] (1.6,5.3);
		\node at (3.5, 7) {profitable};
		\node at (3.5, 6.3) {deviation};
		\node at (-2.3,8.7) {Full Disclosure};
		\node at (-2.3,5) {Non Disclosure};
		\node at (0.5,-1) {$(a)$};
	\end{tikzpicture}\hspace{0.5in}	
	\begin{tikzpicture}[scale=0.45]
		\draw[pattern=crosshatch dots, pattern color=orange, ultra thick] (0,6.5) rectangle (1,10); 
		\draw[ultra thick] (0,0) rectangle (1,6.5); 
		\node at (1.6,0) {$0$};
		\node at (1.6,10) {$1$};
		\draw (1, 6.5)--(1.3,6.5);
		\node  at (1.9, 6.5) {$p_{ND}$};
		\draw[thick, red] (0, 5)--(1.2,5);
		\node [red] at (1.57, 5) {$\frac{1}{2}$};
		\node at (0.5,-1) {$(b)$};
	\end{tikzpicture}\hspace{0.8in}
\begin{tikzpicture}[scale=0.45]
		\draw[pattern=crosshatch dots, pattern color=orange, ultra thick] (0,4) rectangle (1,10); 
		\draw[ultra thick] (0,2) rectangle (1,4); 
		\draw[pattern=crosshatch dots, pattern color=orange, ultra thick] (0,0) rectangle (1,2);
		\node at (1.6,0) {$0$};
		\node at (1.6,10) {$1$};
		\draw (1, 6.5)--(1.3,6.5);
		\draw[thick, red] (0, 5)--(1.2,5);
		\node at (2.5, 5) {$p_{ND}=\frac{1}{2}$};
		\node at (0.5,-1) {$(c)$};	
		\end{tikzpicture}
	}
\vspace{-0.1in}
\caption{$(a)$ shows that any disclosing type that is strictly higher than $p_{ND}$ has a profitable deviation $\Rightarrow$ the set of non-disclosing types includes $(p_{ND},1]$. $(b)$ and $(c)$ show different equilibria where the shaded region is the set of non-disclosing types. Across equilibria, $p_{ND}\geq 1/2$.}\label{Figure-SimpleEvidence}
\end{figure}
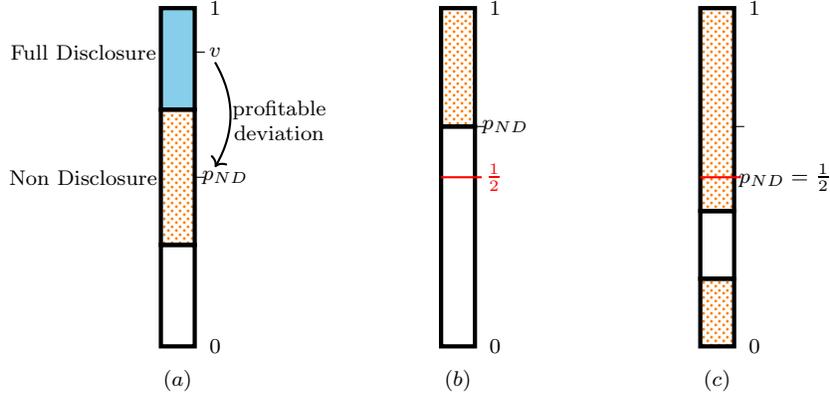

We illustrate the argument in \autoref{Figure-SimpleEvidence}. In an equilibrium where a positive mass send the non-disclosure message, suppose that the monopolist charges $p_{ND}$ when he receives this message. Any type $v$ that is strictly higher than $p_{ND}$ must send the non-disclosure message because her other option---revealing herself---induces a price that extracts all of her surplus (this property is shown in \autoref{Figure-SimpleEvidence}(a)). Hence, the set of types that send the non-disclosure message includes $(p_{ND},1]$. There are various configurations of disclosure and non-disclosure segments that are compatible with this requirement, as shown in (b) and (c), but across them all, the monopolist's optimal non-disclosure price $p_{ND}$ never dips below $\frac{1}{2}$, the price charged without personalized pricing. 

\paragraph{Rich evidence:} \Cref{Observation-Simple} illustrates that simple evidence structures and personalized pricing do not benefit the consumer. Now we see how the consumer can do better if she can use a rich evidence structure.
An evidence structure is \textbf{rich} if for every $v$, $M(v)=\{m\in\messages:v\in m\}$; in other words, a type $v$ can send any interval that contains $v$. With a rich evidence structure, all the equilibrium outcomes described above can still be supported using this richer language. But new possibilities emerge, some of which dominate the payoffs from no-personalized-pricing. 

We describe an equilibrium that strictly improves consumer surplus for a positive measure of consumer types without making any type worse off. Inspired by Zeno's Paradox,\footnote{Zeno's Paradox is summarized by Aristotle as \emph{``...that which is in locomotion must arrive at the half-way stage before it arrives at the goal....''} See \href{https://plato.stanford.edu/entries/paradox-zeno/}{https://plato.stanford.edu/entries/paradox-zeno/}.} consider the countable grid $\left\{1,\frac{1}{2},\frac{1}{4},\ldots\right\}\cup\left\{0\right\}$. We denote the $(k+1)^{th}$ element of this ordered list, namely $2^{-k}$, by $a_{k}$, and the set $m_k\equiv [a_{k+1},a_k]$. We use this partition to construct an equilibrium segmentation that improves consumer surplus. 
\begin{observation}\label{Observation-Rich}
With rich evidence, there exists an equilibrium that generates Zeno's Partition: a consumer's reporting strategy is
	\begin{align*}
m(v) = \begin{cases}
                       [a_{k+1},a_k]\text{ where } a_{k+1}<v\leq a_k & \text{if }v>0, \\
                        \{0\} & \text{if }v=0.
                    \end{cases}\end{align*}
	When the monopolist receives message $m_k$, he charges $a_{k+1}$ thereby selling to that entire segment. Relative to no-personalized-pricing, this equilibrium strictly improves consumer surplus for all $v$ in $\left(0,1/2\right]$, and leaves consumer surplus unchanged for all other types.
\end{observation}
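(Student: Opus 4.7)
The plan is to construct a Perfect Bayesian Equilibrium with the stated reporting rule $m(\cdot)$ together with a pricing rule for the monopolist that is a best response on and off path, and then compare interim payoffs to the no-personalized-pricing benchmark $\max\{v-1/2,0\}$. On path, Bayes' rule pins down the posterior: upon receiving the message $m_k$, the monopolist assigns uniform beliefs to $(a_{k+1},a_k]$, since the prior is uniform and exactly the types in this interval pool on $m_k$. Off path, I will assign the degenerate belief placing mass one on the upper endpoint $b$ of any received interval $[a,b]$ and have the monopolist charge $b$; this is trivially a best response to such a belief.

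The first substantive step is to verify that $p=a_{k+1}$ is the monopolist's best response on path. With uniform posterior on $(a_{k+1},a_k]$, expected revenue at price $p$ equals $p$ for $p\le a_{k+1}$, equals $p(a_k-p)/(a_k-a_{k+1})$ for $p\in(a_{k+1},a_k]$, and is zero for $p>a_k$. The first-order condition on the interior regime vanishes at $p=a_k/2$, and the dyadic spacing $a_k=2a_{k+1}$ implies that this critical point coincides with the lower endpoint $a_{k+1}$. Hence revenue is strictly increasing on $[0,a_{k+1}]$ and weakly decreasing on $(a_{k+1},a_k]$, so $a_{k+1}$ is the unique optimum. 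This is the only non-mechanical piece of the argument and it is precisely what makes Zeno's Partition self-enforcing: the sell-to-everyone price coincides with the interior monopoly price, so the seller has no incentive to skim.

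Next, I check the consumer's incentives. For a type $v\in(a_{k+1},a_k]$, the equilibrium message yields payoff $v-a_{k+1}>0$. Any other available message is either (i) another on-path message $m_j$ with $v\in[a_{j+1},a_j]$, which by disjointness of the grid can only occur at a boundary $v=a_k$ with $j=k-1$, inducing price $a_k$ and payoff $0\le a_k-a_{k+1}$, or (ii) an off-path interval $[a,b]\ni v$, which induces price $b\ge v$ and hence payoff zero. For $v=0$, sending $\{0\}$ induces price zero and payoff zero, which is as good as anything. Thus $m(\cdot)$, together with the buyer's rule ``buy iff $p\le v$,'' is sequentially rational.

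Finally, I compare interim payoffs to the benchmark. For $v\in(1/2,1]$, the consumer is in $m_0=[1/2,1]$ and her equilibrium payoff $v-1/2$ equals the benchmark. For $v=0$, both payoffs are zero. For $v\in(0,1/2]$, the consumer lies in $m_k$ for some $k\ge 1$, so her equilibrium payoff $v-a_{k+1}$ is strictly positive because $a_{k+1}<v$, while the benchmark payoff is zero. Hence the constructed equilibrium weakly dominates the benchmark pointwise and strictly dominates it on $(0,1/2]$, as claimed.
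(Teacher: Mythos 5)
Your proposal is correct and follows essentially the same route as the paper: Bayes-consistent uniform posteriors on each segment, skeptical off-path beliefs concentrated on the highest type consistent with the message, the observation that the interior monopoly price $a_k/2$ coincides with the lower endpoint $a_{k+1}$ under dyadic spacing so the seller prices at the bottom of each pool, and deviation checks that reduce to boundary types among on-path messages. The only difference is that you spell out the seller's revenue comparison explicitly where the paper asserts it, which is a welcome addition rather than a departure.
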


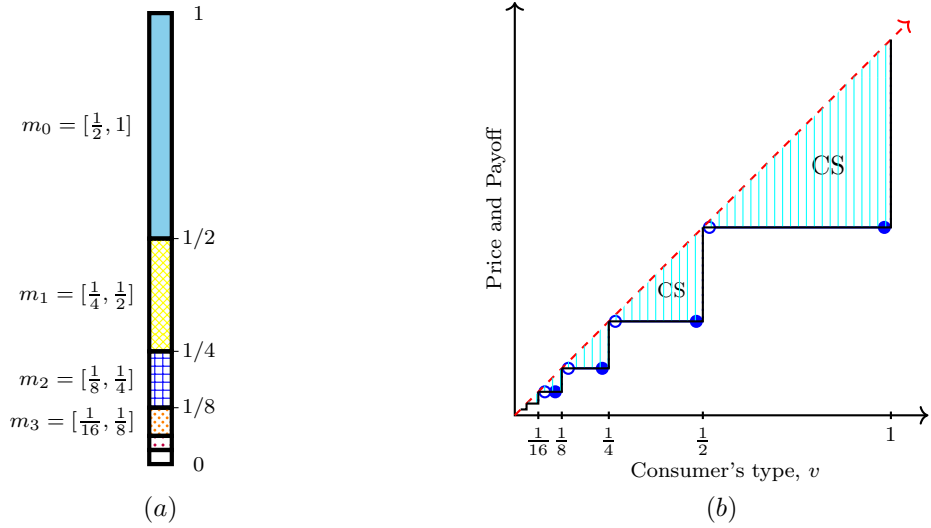
\begin{figure}[t]
\centering
{\scriptsize
	\begin{tikzpicture}[scale=0.3]
		\draw[ultra thick, pattern=north west lines, fill=SkyBlue] (0,12) rectangle (1,22);
		\node at (-3.2, 17){$m_0=[\frac{1}{2},1]$};
		\draw (1, 12)--(1.3,12);
		\node  at (2.2, 22) {$1$};			
		\node  at (2.2, 2) {$0$};
		\node  at (2.2, 12) {$1/2$};
		\draw (1, 7)--(1.3,7);
		\node  at (2.2, 7) {$1/4$};
		\draw (1, 4.5)--(1.3,4.5);
		\node  at (2.2, 4.5) {$1/8$};
		\draw[ultra thick, pattern=crosshatch, pattern color=yellow] (0,7) rectangle (1,12);
		\node at (-3.2, 9.5){$m_1=[\frac{1}{4},\frac{1}{2}]$};
		\draw[ultra thick, pattern=grid, pattern color=blue] (0,4.5) rectangle (1,7);
		\node at (-3.2, 5.75){$m_2=[\frac{1}{8},\frac{1}{4}]$};
		\draw[ultra thick, pattern=crosshatch dots, pattern color=orange] (0,3.25) rectangle (1,4.5);
		\node at (-3.4, 3.875){$m_3=[\frac{1}{16},\frac{1}{8}]$};
		\draw[ultra thick, pattern=dots, pattern color=purple] (0,2.625) rectangle (1,3.25);
		\draw[ultra thick] (0,2) rectangle (1,2.625);
		\node at (0.5,0) {\footnotesize $(a)$};
	\end{tikzpicture}\hspace{1.3in}
	\begin{tikzpicture}[scale=0.5]
		 \draw[<->, thick] (11,0) -- (0,0) -- (0,11) ;
		 \draw (5.5,-1.5) node{{ Consumer's type, $v$}};
		\node at (5.5,-2.5) {\footnotesize $(b)$};
		 \draw (-.5,5.5) node[rotate=90]{ Price and Payoff};
		 \draw (8.333,6.6667) node{\footnotesize CS};	 		 
		 \draw (4.1667,3.3333) node{CS};
		\draw[thick] (10, -0.1) node[below]{$1$} -- (10,0.1);
		\draw[thick] (5, -0.1) node[below]{$\frac{1}{2}$} -- (5,0.1);
		\draw[thick] (2.5, -0.1) node[below]{$\frac{1}{4}$} -- (2.5,0.1);
		\draw[thick] (1.25, -0.1) node[below]{$\frac{1}{8}$} -- (1.25,0.1);
		\draw[thick] (0.625, -0.1) node[below]{$\frac{1}{16}$} -- (0.625,0.1);
		\draw[thick, blue,Circle-{Circle[open]}] (10, 5) -- (5,5);
		\draw[thick, blue,Circle-{Circle[open]}] (5, 2.5) -- (2.5,2.5);
		\draw[thick, blue,Circle-{Circle[open]}] (2.5, 1.25) -- (1.25,1.25);
		\draw[thick, blue, Circle-{Circle[open]}] (1.25, 0.625) -- (0.625,0.625);
		\draw[thick, red, dashed, ->] (0,0)--(10.5,10.5);
		\draw[thick, blue, dotted] (10,5)--(10,10);
		\draw[thick, blue, dotted] (5,5)--(5,2.5);
		\draw[thick, blue, dotted] (2.5,2.5)--(2.5,1.25);
		\draw[thick, blue, dotted] (1.25,1.25)--(1.25,0.625);
		\draw[thick, blue, dotted] (0.625,0.625)--(0.625,0.3125);
		\draw[thick, pattern=vertical lines, pattern color=cyan] (10,10)--(10,5)--(5,5);
		\draw[thick, pattern=vertical lines, pattern color=cyan] (2.5,2.5)--(5,2.5)--(5,5);
		\draw[thick, pattern=vertical lines, pattern color=cyan] (2.5,2.5)--(2.5,1.25)--(1.25,1.25);
		\draw[thick, pattern=vertical lines, pattern color=cyan] (1.25,1.25)--(1.25,0.625)--(0.625,0.625);
		\draw[thick, pattern=vertical lines, pattern color=cyan] (0.625,0.625)--(0.625,0.3125)--(0.3125,0.3125);
		\draw[thick, pattern=vertical lines, pattern color=cyan] (0.3125,0.3125)--(0.3125,0.15625)--(0.15625,0.15625);		
	\end{tikzpicture}	
	}
\vspace{-0.1in}	
	\caption{(a) illustrates Zeno's Partition. (b) illustrates prices and payoffs: for each consumer-type $v$, the step-function shows the equilibrium price that is charged and the dashed $45^{\circ}$ line shows the payoff from consumption. The shaded region illustrates the consumer surplus achieved by Zeno's Partition.}\label{Figure-Zeno}
\end{figure}

In this equilibrium, the highest market segment is composed of types in $\left(\frac{1}{2},1\right]$, all of which send the message $m_0\equiv \left[\frac{1}{2},1\right]$; the next highest market segment comprises types in $\left(\frac{1}{4},\frac{1}{2}\right]$, all of which send the message $m_1\equiv \left[\frac{1}{4},\frac{1}{2}\right]$, and so on and so forth. We depict this partition in \autoref{Figure-Zeno}. Once the monopolist receives any message corresponding to each market segment, he believes that the consumer's value is uniformly distributed on it. His optimal strategy then is to price at the bottom of the segment. Therefore, trade occurs with probability $1$, with each higher consumer type capturing some surplus.

This equilibrium generates an ex ante consumer surplus of $\frac{1}{6}$ and producer surplus of $\frac{1}{3}$, each of which is higher than what is achieved without personalized pricing. All types in $(1/2,1]$ receive the same price that they would have if personalized pricing were infeasible, and almost every other type is strictly better off. Thus, personalized pricing improves the monopolist's profit, strictly improves the surplus for some consumer types without making any worse off.

How is Zeno's Partition supportable as an equilibrium? First, let us see what deters consumers from using messages that are not in Zeno's Partition. If the monopolist sees such a message, his off-path beliefs ascribe probability $1$ to the highest type that could send such a message, which leads him to charge a price equal to that type. Such beliefs ensure that these off-path messages are not profitable deviations. How about deviations to other on-path messages? For every $v$ in $(a_{k+1},a_k)$, there exists only one on-path message that it can send, and for every $v$ on the boundaries of such messages, our strategy profile prescribes the message that results in the lower price. Thus, there are no profitable deviations for any consumer type. Finally, by construction, the monopolist is always setting a price that is optimal given its beliefs.

It is useful to understand why we do not see unraveling. In many disclosure models, the sender strictly prefers to induce the receiver to have higher (or lower) beliefs in the sense of first-order stochastic dominance. Unraveling emerges as the unique equilibrium outcome as extreme types of the sender have a motive to separate from pools. By contrast, in our setting, there exist many pairs of beliefs $(\mu,\hat\mu)$ that are ranked by FOSD such that the sender is indifferent between inducing $\mu$ and $\hat\mu$ because they result in the receiver taking the same action. For example, the monopolist charges the same price when he ascribes probability $1$ to type $\{1/2\}$ as he does when his beliefs are $U[1/2,1]$. Thus, higher types may be pooled with a lower type without giving that lower type an incentive to separate. 

Zeno's Partition isn't the only equilibrium of this example, but in this case, it maximizes ex ante consumer surplus. To demonstrate why this is the case, we prove in \Cref{Section-MonopolistOptimalSegmentation} that for a one-dimensional type space, for every equilibrium, there exists an interim payoff-equivalent equilibrium in which trade occurs with probability $1$ and types segment into partitions. Thus, it is without loss of generality to consider only those equilibria that are fully efficient and partitional. Now we can use a heuristic argument to show why Zeno's Partition is optimal when types are uniformly distributed. 

Because consumers always purchase in a fully efficient equilibrium, maximizing consumer surplus is equivalent to minimizing the average price. For a monopolist to price at the bottom of an interval $[a,b]$ when $v$ is uniformly distributed between $a$ and $b$, it must be that $a\geq b/2$. Suppose that the consumer-optimal equilibrium involves types from $[\lambda,1]$ forming the highest segment; by the logic of the previous sentence, $\lambda$ is at least $1/2$. The monopolist charges a price of $\lambda$ to that segment, and thus, its contribution to the ex ante expected price is $(1-\lambda)\lambda$. The remaining population, $[0,\lambda]$, amounts to a $\lambda$-rescaling of the original problem, and so the consumer-optimal equilibrium after removing that highest segment involves replicating the same segmentation on a smaller scale. Thus, the consumer-optimal segmentation can be framed as a recursive problem where ${P}(\bar{v})$ is the lowest expected price generated by a partition when types are uniformly distributed on the interval $[0,\bar{v}]$: 
\begin{align*}
	{P}(1) &=\min_{\lambda\geq \frac{1}{2}}\, (1-\lambda)\lambda +\lambda {P}(\lambda)=\min_{\lambda\geq \frac{1}{2}}\, (1-\lambda)\lambda +\lambda^2 {P}(1)=\min_{\lambda\geq \frac{1}{2}}\, \frac{(1-\lambda)\lambda}{1-\lambda^2}=\frac{1}{3},
\end{align*}
where the first equality frames the problem recursively, the second is from ${P}(\lambda)$ being a re-scaled version of the original problem, and the remainder is algebra. Because Zeno's Partition induces an expected price of $1/3$, no segmentation can generate higher consumer surplus.

\subsection{Bertrand Competition with Horizontal Differentiation}\label{Section-ExampleCompetition}

The analysis in \Cref{Section-ExampleMonopolist} shows that in a monopolistic market, a consumer never benefits from using simple evidence but can obtain Pareto gains if the evidence structure is rich. Here we show that if there is market competition, even simple evidence suffices to generate Pareto gains in consumer surplus. We illustrate this effect in an  example of Bertrand competition with horizontal differentiation. 

Two firms, $L$ and $R$, compete to sell to a consumer who must purchase one unit of the good from either firm. Firm $L$ is located at the point $\ell_L=-1$, firm $R$ at the point $\ell_R=1$. The consumer's location, $\ell$, is drawn uniformly from $[-1,1]$. The consumer knows her location but the firms do not. If the consumer purchases the good from firm $i$, then she pays the price $p_i$ that is set by firm $i$, as well as a linear transportation cost $|\ell_i-\ell|$. 
If there were no voluntary disclosure, then each firm would set a price of $2$ in equilibrium. The consumer would then buy the good from the closer firm and incur a total expenditure of $2+\min\{1+\ell,1-\ell\}$. 

Let us describe what can happen with simple evidence. Now the consumer at location $\ell$ can disclose one of two messages to each firm privately before prices are set: either she can send a message of $\{\ell\}$, which fully reveals her location, or a message of $[-1,1]$, which fully conceals it. \Cref{Figure-ExampleSimpleCompetition} illustrates an equilibrium where the consumer selectively discloses evidence to amplify price competition and reduce her total expenditure. 
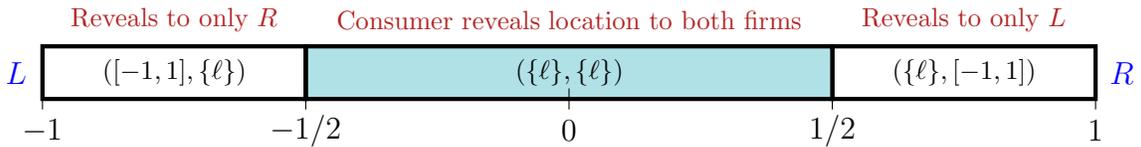
\begin{figure}[b]
\centering
{	\begin{tikzpicture}[scale=0.7]
		\draw[ultra thick] (15,0) rectangle (20,1); 
		\draw[ultra thick, pattern=north west lines, fill=PowderBlue] (5,0) rectangle (15,1); 
		\draw[ultra thick] (0,0) rectangle (5,1); 
		\draw (10,-0.2)--(10,0.2);
		\node at (10,-0.6) {$0$};
		\draw (15,-0.2)--(15,0.2);
		\node at (15,-0.6) {$1/2$};
		\draw (5,-0.2)--(5,0.2);
		\node at (5,-0.6) {$-1/2$};
		\draw (0,-0.2)--(0,0.2);
		\node at (0,-0.6) {$-1$};
		\draw (20,-0.2)--(20,0.2);
		\node at (20,-0.6) {$1$};
 		\node[blue] at (-0.5,0.5) {$L$};
 		\node[blue] at (20.5,0.5) {$R$};
 		\node at (2.5,0.5){{\footnotesize $([-1,1],\{\ell\})$}};
 		\node at (17.5,0.5){{\footnotesize $(\{\ell\},[-1,1])$}};
 		\node at (10,0.5){{\footnotesize $(\{\ell\},\{\ell\})$}};
 		\node[FireBrick] at (10,1.5){{\footnotesize Consumer reveals location to both firms}};
 		\node[FireBrick] at (2.5,1.5){{\footnotesize Reveals to only $R$}};
 		\node[FireBrick] at (17.5,1.5){{\footnotesize Reveals to only $L$}};
	\end{tikzpicture}	
	}
\vspace{-0.1in}
\caption{The figure shows disclosure strategies for every type. Centrally located types fully reveal location to both firms. Extreme types reveal location only to the distant firm and conceal it from the closer firm.}\label{Figure-ExampleSimpleCompetition}
\end{figure}

In this construction, the consumer conceals her location when she has a strong preferences for the product of one of the firms. Thus, if her location is in $[1/2,1]$, she reveals her location to the distant firm $L$ but conceals it from $R$; she plays a symmetric strategy if her location is in $[-1,-1/2]$. Only if her location is in $(-1/2,1/2)$ does she reveal her location to both firms. 

Why does this disclosure strategy lower market prices? Suppose that the consumer is located in $[1/2,1]$, and hence she discloses her location to firm $L$ but not to firm $R$. Firm $R$ infers from this non-disclosure that the consumer must be located in $[1/2,1]$, but does not know where in that interval. Firm $L$, on the other hand, knows the consumer's location, and to poach the consumer's business, it has to significantly reduce its price. In equilibrium, this powerful force pushes firm $L$ to lower its price all the way to its marginal cost of $0$. Anticipating this outside option for the consumer, firm $R$'s solves for its optimal local monopoly price, which is $1$. At these prices, the consumer purchases from firm $R$, and therefore has a total expenditure of $2-\ell$, which is strictly smaller than the total expenditure without personalized pricing.

What about all those interior types in $(-1/2,1/2)$ that are fully revealing themselves to each firm? All of these interior types are unraveling to both firms, but in this competitive market, this is an advantage: each obtains a price of $0$ from the distant firm and a price from the closer firm that makes it just indifferent. In equilibrium, each incurs a total expenditure of $\max\{1+\ell,1-\ell\}$, which once again is below that without personalized pricing. 

We see that consumer control, even through simple evidence, can be a powerful force for consumer gains. The idea is simple: by disclosing her preferences to distant firms, the consumer motivates them to lower prices. One's home firm then knows that even though the consumer has strong preferences for its product, she can purchase from another firm at a very low price. As we show in \Cref{Section-Competition}, this force manifests more broadly: it applies when there are more than two firms, and for both general distributions and general forms of product differentiation. We also show there that if the consumer has access to rich evidence, she can improve upon this segmentation using a Zeno-like construction just as we did in the monopolist example.

\section{Voluntary Disclosure to a Monopolist}\label{Section-MonopolistModel}
\subsection{Environment}\label{Subsection-Environment}
\paragraph{The Pricing Problem.}
A monopolist (``he'') sells a good to a single consumer (``she''), who demands a single unit. The consumer's type, denoted by $t$, is drawn according to a measure $\mu$ whose support is ${T}$. The type space $T$ is a convex and compact subset of a finite-dimensional Euclidean space, $\Re^k$. Each of these $k$ dimensions of a consumer's type reflect attributes that affect her valuation for the good according to $v: T\rightarrow \Re$. Payoffs are quasilinear: if the consumer purchases the good from the monopolist at price $p$ when her type is $t$, her payoff is $v(t)-p$ and the monopolist's payoff is $p$; otherwise, each player receives a payoff of $0$. We denote by $F$ the induced CDF over valuations; in other words, $F(\tilde v)\equiv\mu(\{t\in T:v(t)\leq \tilde{v})$. We denote by $\lowv$ and $\highv$ the highest and lowest valuations in the support. 
We simplify exposition by assuming that $F$ is continuous and $F(\lowv)=0$.\footnote{An equivalent assumption is that $\mu(\{t\in T:v({t})=v'\})=0$ for every $v'$ in the range of $v(\cdot)$. Conditions that guarantee this property are that $\mu$ is absolutely continuous with respect to the Lebesgue measure, and $v(\cdot)$ is strictly monotone in each dimension. }
  
Throughout our analysis, we assume that $v(t)$ is non-negative for every type $t\in \mathcal T$ and is quasiconvex.
A special leading case is where each dimension of $t$ is a consumer characteristic (e.g. income) and $v(t)$ is linear; in this case, $v(t)=\sum_{i=1}^k \beta_i t_i$ where $\beta_i$ is the coefficient on characteristic $i$. We order types based on their valuations: we say that $t\succeq t'$ if $v(t)\geq v(t')$, and we define $\succ$ and $\sim$ equivalently. When $t\succeq t'$, we refer to $t$ as being a \emph{higher} type.

Were communication infeasible, this pricing problem has a simple solution: the monopolist sets a price $p$ that maximizes $p(1-F(p))$. Let $\optprice$ denote the (lowest) optimal price for the monopolist. The consumer's interim payoff is then no more than $\max\{v(t)-\optprice,0\}$.

\paragraph{The Disclosure Game.} We append a disclosure game to this pricing problem. 
After observing her type, the consumer chooses a message $m$ from the set of messages available to her. 
The set of \emph{all} feasible messages is $\messages^{\mathcal F}\equiv \left\{M\subseteq T:M\text{ is closed and convex}\right\}$, and we interpret a message $M$ in $\messages^{\mathcal F}$ as meaning \emph{``My type is in the set $M$.''} 
When a consumer's type is $t$, the set of messages that she can send is $\messages(t)\subseteq \messages^{\mathcal F}$. We focus attention on the following two different forms of disclosure:
\begin{itemize}[noitemsep]
    \item the evidence structure is \textbf{simple} if for every $t$, $\messages(t)=\{ T,\{t\}\}$.
    \item the evidence structure is \textbf{rich} if for every $t$, $\messages(t)=\{M \in \messages^{\mathcal F}:t\in M\}$.
\end{itemize}
In both simple and rich evidence structures, the consumer has access to hard information about her type. In a simple evidence structure, the consumer can either disclose a ``certificate'' that fully reveals her type or says nothing at all. By contrast, in a rich evidence structure, the consumer can verifiably disclose true statements about her type without being compelled to reveal everything.
The assumption that messages are convex sets implies that if types $t$ and $t'$ can disclose some common evidence, then so can any intermediate type $t'' = \rho t+(1-\rho)t'$ (for $\rho\in (0,1)$).

\paragraph{Timeline and Equilibrium Concept.} First, the consumer observes her type $t$ and chooses a message $M$ from $\messages(t)$. The monopolist then observes the message and chooses a price $p\geq 0$. The consumer then chooses whether to purchase the good. We study Perfect Bayesian Equilibria (henceforth PBE) of this game. In other words, the seller's belief system is updated via Bayes Rule whenever possible, and both the consumer and the seller behave sequentially rationally.\footnote{If the seller receives message $m$, his beliefs (both on- and off-path) must have a support that is contained in $\{t\in T:m\in \messages(t)\}$.} For expository convenience, we assume that a consumer always breaks her indifference in favor of purchasing the good.

\subsection{Simple Evidence Does Not Help Consumers}\label{Section-SimpleEvidence}

Here, we show that when trading with a monopolist, consumers do not benefit from personalized pricing if the evidence structure is simple relative to a benchmark in which personalized pricing is impossible.
Recall that the interim payoff of each type $t$ without personalized pricing is $\max\{v(t)-p^*,0\}$ where $p^*$ is the monopolist's optimal price. Generalizing the argument of \Cref{Section-Example}, we show that there are equilibria with simple evidence that make all consumer types worse off and there are no equilibrium in which \emph{any} type is strictly better off. 

That consumers may be worse off is easily illustrated through the equilibrium in which the consumer fully reveals her type with probability $1$ and the monopolist charges a price of $v(t)$. Off-path, the seller's beliefs are \emph{maximally skeptical} in that he believes that the consumer's type has the highest possible valuation with probability $1$. The monopolist thus extracts all surplus, and so consumers are clearly worse off. 

But there are also partially revealing equilibria in which only those types below a cutoff are revealed. For example, there exists an equilibrium in which all types $t$ where $v(t)\geq p^*$ send the message $T$ and only types below $p^*$ reveal. This equilibrium results in payoffs for the consumer identical to that without personalized pricing. This is the consumer-optimal equilibrium. 

\begin{proposition}\label{Proposition-Certification}
With simple evidence, across all equilibria, the consumer's interim payoff is bounded above by $\max\{v(t)-p^*,0\}$.
\end{proposition}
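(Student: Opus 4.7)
My plan is to split by the consumer's equilibrium message. For any type $t$ that sends the full-revelation message $\{t\}$ with positive probability, the monopolist's unique best reply is to charge $v(t)$ and extract all surplus, so that contribution to $U(t)$ is $0$, which is weakly below $\max\{v(t)-p^*,0\}$. The same bound is automatic in the degenerate equilibrium where the non-disclosure message $T$ is never sent. It therefore suffices to bound the expected payoff from sending $T$.

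Let $\tau$ denote the monopolist's (possibly mixed) pricing distribution after $T$, and set $p_{\min}=\inf\operatorname{supp}(\tau)$. The key observation is that any type with $v(t)>p_{\min}$ \emph{strictly} prefers sending $T$, because $\tau$ places positive probability on prices strictly below $v(t)$ (yielding positive expected surplus), while $\{t\}$ yields zero. So every such type non-discloses with probability one, which means the posterior over valuations given $T$ agrees, up to rescaling by $1/q$ with $q\equiv\Pr(T)$, with the prior on $\{v>p_{\min}\}$. Writing $R(p)=p(1-F(p))$ for the unconditional revenue, continuity of $F$ then yields $R_{ND}(p)=R(p)/q$ for $p\ge p_{\min}$ and $R_{ND}(p)\le R(p)/q$ for $p<p_{\min}$.

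It remains to show that every $p'\in\operatorname{supp}(\tau)$ satisfies $p'\ge p^*$. When $p_{\min}\ge p^*$ this is immediate. Otherwise $p^*>p_{\min}$, and the two displayed inequalities give $R_{ND}(p)\le R(p)/q\le R(p^*)/q=R_{ND}(p^*)$ for every $p$, so $p^*$ itself is optimal given $T$. Any other optimal $p'$ must attain $R_{ND}(p')=R(p^*)/q$, which (whether $p'\ge p_{\min}$ or $p'<p_{\min}$) forces $R(p')=R(p^*)$; thus $p'$ is an unconditional maximizer of $R$, and the ``lowest optimal price'' convention yields $p'\ge p^*$. Combining, a non-disclosing type's expected payoff is $E_{\tau}[\max\{v(t)-p,0\}]\le \max\{v(t)-p^*,0\}$, and mixing across messages preserves the bound.

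The main subtlety is the passage from pure to mixed pricing strategies on the monopolist's side: the pure-strategy argument is essentially a one-step reduction to the unconditional pricing problem (as in the uniform example), but in general I would work with $\inf\operatorname{supp}(\tau)$ and use continuity of $F$ to handle behavior at $p=p_{\min}$, which is what makes the identity $R_{ND}(p)=R(p)/q$ hold on the entire half-line $[p_{\min},\infty)$ rather than just its interior.
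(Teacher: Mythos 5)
Your argument is correct and follows essentially the same route as the paper's proof: revealed types earn zero, every type with valuation above the (infimum of the) non-disclosure price must pool on $T$, so the non-disclosure segment contains the upper tail of the prior, which forces every optimal price after $T$ to be at least $p^*$. The only difference is that you explicitly handle a mixed pricing distribution $\tau$ after non-disclosure, whereas the paper argues by contradiction against a single price $\tilde p<p^*$; this is a harmless refinement rather than a different approach.
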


Thus, the consumer gains {nothing}, \emph{ex ante} and \emph{ex interim}, from the ability to disclose her type using simple evidence. If one takes the model of simple evidence as a stylized representation of track / do-not-track regulations, our analysis implies that this form of consumer protection does not benefit consumers in a monopolistic environment relative to a benchmark that prohibits personalized pricing. Instead, richer forms of verifiable disclosure are needed.

\subsection{A Pareto-Improving Segmentation with Rich Evidence}\label{Section-GreedyPareto}

We develop a segmentation that generalizes that of \Cref{Section-Example}. 
Each segment is constructed so that  the monopolist's optimal price sells to all consumer types in that segment. Consumers would profit if they could deviate ``downwards'' to a lower  segment; our construction guarantees that this is impossible. Our construction is ``greedy'' insofar as we start with the highest segment and make each as large as possible without accounting for its effect on subsequent segments. 

To define the segmentation strategy, consider a sequence of prices  $\{p_s\}_{s=0,1,2,\ldots,S}$ where $S\leq\infty$, $p_0=\highv$, and for every $s$ where $p_{s-1}>\lowv$, $p_s$ is the (lowest) maximizer of $p_s(F(p_{s-1})-F(p_s))$. If $p_{s'}=\lowv$ for some $s'$, then we halt the algorithm and set $S=s'$; otherwise, $S=\infty$ and $p_\infty=\lowv$.
We use these prices to construct sets of types, $(M_s)_{s=1,2,\ldots,S}\cup M_\infty$:
\begin{align*}
    M_s&\equiv\{t\in T:v(t)\leq p_{s-1}\}.\\
    M_\infty&\equiv\{t\in T:v(t)=\lowv\}
\end{align*}
Because $v$ is quasiconvex and $T$ is convex, $M_s$ is a convex set for every $s=0,1,2,\ldots,S$, and therefore $M_s$ is a feasible message. These messages segment the market. 
\begin{proposition}\label{Proposition-MonopolistConsumerSegmentation}
    With rich evidence, there exists a Pareto-improving equilibrium in which a consumer's reporting strategy is  
	\begin{align*}
M^*(t) = \begin{cases}
                       M_s & \text{if }p_{s}<v(t)\leq p_{s-1}, \\
                        M_\infty & \text{if }t\in M_\infty.
                    \end{cases}\end{align*}    
When receiving an equilibrium disclosure of the form $M_s$, the seller charges a price of $p_s$ and sells to all types that send that message.  
\end{proposition}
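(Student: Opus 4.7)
The plan is to verify that the candidate profile forms a PBE under suitable off-path beliefs, and then confirm the Pareto ranking. As a preliminary, I would establish structural facts about $\{p_s\}$: the sequence is strictly decreasing with $p_1=\optprice$, and either hits $\lowv$ in finitely many steps or converges to $\lowv$. Strict decrease follows because $p_s=p_{s-1}$ delivers zero revenue from the defining optimization, whereas continuity of $F$ with $F(\lowv)=0$ makes some interior price strictly profitable whenever $p_{s-1}>\lowv$. Convergence is by contradiction: a limit $p_\infty>\lowv$ would force the maximizers $p_s$ to approach the maximizer of $p(F(p_\infty)-F(p))$, which lies strictly below $p_\infty$. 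Each set $M_s=\{t:v(t)\leq p_{s-1}\}$ is closed and convex by continuity and quasiconvexity of $v$, hence admissible as a message under rich evidence.

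To check the seller's incentive, I would compute his expected profit under the posterior induced by message $M_s$, which is $F$ truncated to $(p_s,p_{s-1}]$. Prices above $p_{s-1}$ yield zero; any $p\in(p_s,p_{s-1}]$ yields $p(F(p_{s-1})-F(p))/(F(p_{s-1})-F(p_s))\leq p_s$ by the very definition of $p_s$ as the lowest maximizer of $p(F(p_{s-1})-F(p))$; any $p\leq p_s$ yields at most $p_s$ since the entire segment purchases. Hence $p_s$ is a best response and every type in $M_s$ transacts.

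For the consumer's incentive, I would observe that a type $t$ with $v(t)\in(p_s,p_{s-1}]$ can feasibly report $M_{s'}$ only when $t\in M_{s'}$, i.e., $v(t)\leq p_{s'-1}$; since $\{p_s\}$ decreases, this forces $s'\leq s$, so every on-path deviation induces a weakly higher price $p_{s'}\geq p_s$ and weakly smaller surplus. For off-path messages $M\in\messages^{\mathcal F}$ with $t\in M$, I would specify maximally skeptical beliefs: the seller concentrates on a type in $M$ of highest valuation and charges that valuation, so the deviator obtains at most zero surplus, no better than the equilibrium payoff $v(t)-p_s\geq 0$.

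Finally, since $p_1=\optprice$ and $p_s<\optprice$ for $s\geq 2$, types with $v(t)>\optprice$ face exactly the benchmark price, types with $v(t)\in(\lowv,\optprice]$ now trade at some $p_s<\optprice$ and earn strictly positive surplus (versus zero in the benchmark), and types with $v(t)=\lowv$ are indifferent; the monopolist is weakly better off because he could replicate his benchmark profit by pooling at $M_1$, so the outcome is Pareto-improving. The main obstacle I anticipate is the on-path incentive check: the construction relies essentially on the hard-evidence restriction $t\in M$ to forbid higher-valuation types from mimicking the messages of lower segments, so the argument must invoke the structure of rich evidence itself---not merely the seller's pricing logic---to escape the standard unraveling forces.
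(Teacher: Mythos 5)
Your proposal is correct and follows essentially the same route as the paper's proof: the same maximally skeptical off-path beliefs (mass on the highest-valuation type in $M$, priced at that valuation), the same observation that feasibility ($t\in M_{s'}$ requires $v(t)\leq p_{s'-1}$) blocks downward mimicry, and the same best-response logic for the seller. You are in fact somewhat more complete than the paper, which leaves the monotonicity and convergence of $\{p_s\}$, the explicit truncated-posterior profit comparison, and the Pareto-ranking verification implicit.
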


The segmentation described above generalizes the ``Zeno Partition'' constructed in \Cref{Section-Example}. The highest market segment consists of those consumer types whose valuations strictly exceed the monopolist's optimal posted price, $p_1=p^*$; these are the types who send message $M_1$. The next highest market segment comprises those whose valuations exceed the optimal posted price, $p_2$, for the \emph{truncated distribution} that excludes the highest market segment; they send message $M_2$. This iterative procedure continues either indefinitely (if $p_s>\lowv$ for every $s$) or halts once the monopolist has no incentive to exclude any type in the truncated distribution from trading. 

Notice that in this segmentation, disclosures aren't taken at face value.
Instead, the monopolist infers from receiving a message $M_s$ that the consumer would have preferred to send message $M_{s+1}$ but couldn't, and so her valuation must be in $(p_s,p_{s-1}]$. 
Notice also that the market segmentation is constructed so that given these beliefs about the consumer's valuation, the monopolist has no incentive to charge a price that excludes any type. In fact, this constraint for the monopolist binds in our \emph{greedy segmentation} in that if types below $p_s$ were included, the seller's optimal price would exclude those types.

This equilibrium segmentation is fully efficient---trade occurs with probability $1$---and improves consumer surplus relative to the benchmark without personalized pricing. Consumer types in the highest market segment face the same price that they would without personalized pricing, but now consumers in other market segments can also purchase at prices that are (generically) below their willingness to pay. Thus, the segmentation is a Pareto improvement. One feature of the segmentation that is attractive is its simplicity: all that consumers have to disclose is information about their willingness to pay.

Finally, we note that this construction is robust to the possibility that the consumer does not have evidence with positive probability: there exists an equilibrium in this setting where if the consumer lacks evidence, she is charged a price of $p_1=p^*$, and all those with evidence behave as above. The consumer never gains from imitating those who lack evidence. 

\subsection{Optimal Equilibrium Segmentation}\label{Section-MonopolistOptimalSegmentation}

In this section, we explore conditions under which the disclosure strategy above is the ex ante optimal equilibrium for consumers. There are two reasons that this equilibrium may not maximize ex ante consumer surplus among equilibria. The first is that it ignores multidimensionality: even if two types have the same valuation, it may be optimal to separate them. The second reason is that even in a one-dimensional world, packing types greedily need not maximize consumer surplus. To explore this issue, we restrict attention to a one-dimensional model. We prove that the consumer-optimal equilibrium uses a partitional structure. We show by an example that the greedy partition may be suboptimal for some distributions. We then prove that it is consumer-optimal for power distribution functions. 

Let the set of types $T$ be identical to the set of values $[\lowv,\highv]$ and $v(t)=t$. Recall that $\lowv\geq 0$ and that $F$ is an atomless CDF over valuations. The restriction that $M(t)$ is a closed convex set that includes $t$ implies that here, $M(t)=\{[a,b]\subseteq [\lowv,\highv]:a\leq t\leq b\}$; in other words, the set of all closed intervals that include $t$. Applied to this setting, \Cref{Proposition-MonopolistConsumerSegmentation} identifies an equilibrium segmentation of the form $\left\{[0,p_s]\right\}_{s=0,1,2,\ldots,S}$ where $p_s$ is the optimal price to when the distribution $F$ is truncated to $[0,p_{s-1}]$. Because only types in $(p_{s+1},p_{s}]$ send the message $[0,p_s]$, a payoff-equivalent segmentation is for a type $t$ to send the message $[p_{s+1},p_s]$ where $p_{s+1}<t\leq p_s$. This equilibrium is ``partitional'' in that types reveal the member of the partition to which they belong, and thus, these messages can be taken at ``face value''. We prove that for any equilibrium, there always exists a payoff-equivalent equilibrium that is partitional and involves the sale happening with probability $1$.  

Our characterization uses the following definitions. A PBE is efficient if trade occurs with probability $1$. A collection of sets $\partition$ is a \textbf{partition} of $[\lowv,\highv]$ if $\partition$ is a subset of $\messages^F$ such that $\bigcup_{m\in\partition} m=[\lowv,\highv]$ and for every distinct $m,m'$ in $\partition$, $m\bigcap m'$ is at most a singleton. One message $m$ dominates $m'$ (i.e. $m\succeq_{\mathcal M} m'$) if for every $t\in m$ and $t'\in m'$, $t\geq t'$; $\arg\min$ and $\arg\max$ over a set of messages refers to this partial order. Given a partition $\partition$, let $m^{\partition}(t)\equiv \arg\min_{\{m\in \partition:t\in m\}}m$. An equilibrium $\sigma$ is \textbf{partitional} if there exists a partition $\partition$ such that $m^\sigma(t)=m^{\partition}(t)$, and for every $m$ in $\partition$, $p^\sigma(m)=\min_{t\in m} t$. 

\begin{proposition}\label{Proposition-MonopolistUnidimensional}
Given any equilibrium $\sigma$, there exists an efficient partitional equilibrium $\tilde\sigma$ that is payoff-equivalent for almost every type.    
\end{proposition}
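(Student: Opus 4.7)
Given $\sigma$, set $q(t) := t - U^\sigma(t)$, the ``effective price'' paid by type $t$ (equal to $t$ if $t$ does not trade or trades at break-even); by construction $q(t) \leq t$. The goal is to build a partition $\partition$ of $[\lowv,\highv]$ so that, for almost every $t$, the infimum of the element of $\partition$ containing $t$ equals $q(t)$; then the induced partitional equilibrium $\tilde\sigma$ will be payoff-equivalent to $\sigma$.

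The first step is to show $q$ is non-decreasing. Suppose toward contradiction $t_1 < t_2$ with $q(t_1) > q(t_2)$, and let $m_i = m^\sigma(t_i)$, $p_i = p^\sigma(m_i)$. If $t_1 \in m_2$, $t_1$ deviates to $m_2$ securing $t_1 - p_2 > t_1 - p_1$, a contradiction; so $\min m_2 > t_1$. If $t_2 \in m_1$, then since $t_1 \in m_1$ and $t_1$ trades, $p_1 \leq t_1 < \min m_2 \leq p_2$, contradicting $p_1 > p_2$; hence $t_2 \notin m_1$ as well. If $m_1$ and $m_2$ are disjoint the ordering gives $p_1 \leq \max m_1 < \min m_2 \leq p_2$, another contradiction; if they overlap, every type in $m_1 \cap m_2$ strictly prefers the cheaper message $m_2$, so the seller's posterior given $m_1$ is supported on $m_1 \setminus m_2 \subseteq [\min m_1, \min m_2)$, strictly below $p_1 \geq \min m_2$, and charging $p_1$ then yields zero revenue, contradicting its optimality. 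An analogous case analysis extends monotonicity to non-trading types.

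Because $q$ is non-decreasing, each level set $L_p := q^{-1}(p)$ is an interval with $\inf L_p \geq p$. A parallel argument shows $\inf L_p = p$ for every $p \in Q^+ := \{p : \sup L_p > p\}$: for any on-path message $m$ inducing $p$, the set of buyers $T_m^{\mathrm{buy}} \subseteq m \cap L_p$ must have infimum $p$, for otherwise raising the price to $\inf T_m^{\mathrm{buy}}$ preserves every sale while strictly increasing revenue, contradicting optimality of $p$ at $m$. Define
\[
\partition := \bigl\{\overline{L_p} : p \in Q^+ \bigr\} \cup \bigl\{\{t\} : q(t) = t \text{ and } t \notin \textstyle\bigcup_{p \in Q^+} \overline{L_p}\bigr\}.
\]
Monotonicity guarantees these convex sets intersect pairwise in at most a singleton and together cover $[\lowv,\highv]$.

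In $\tilde\sigma$, each type sends the minimal (under $\succeq_\messages$) element of $\partition$ containing her, and the seller charges the infimum of each element. Payoff-equivalence holds outside the countable set of partition boundaries: for $t$ in the interior of some $\overline{L_p}$ the price equals $p = q(t)$, and for $t$ in a singleton the price equals $t = q(t)$. The seller's best response to each pool $\overline{L_p}$ is $p$ because for any $p' \geq p$ the revenue under $F\mid_{\overline{L_p}}$ is proportional to $p' \cdot \mu(L_p \cap [p', \highv])$, the same quantity already maximized at $p$ in $\sigma$, while for $p' < p$ the revenue is just $p' < p$; singletons are trivially handled; and consumer incentive compatibility is enforced by supporting $\tilde\sigma$ with maximally skeptical off-path beliefs. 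The step I expect to demand the most care is the monotonicity argument together with the parallel $\inf L_p = p$ claim: both hinge on the ``steal-the-middle'' observation that lower-priced overlapping pools siphon away the intermediate types, collapsing the higher-priced posterior below its intended price.
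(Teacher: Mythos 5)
Your proof is correct and takes essentially the same route as the paper's: monotonicity of the effective price $q$ plays the role of the paper's connectedness-of-equal-price-sets step, treating the types with $q(t)=t$ as degenerate segments reproduces the paper's Efficiency Lemma, and the seller-optimality check on each pool $\overline{L_p}$ is the paper's linearity-in-beliefs argument combined with rescaling the posterior above the price. The only blemish is the overlapping-messages sub-case of the monotonicity argument, where the contradiction already follows directly from $p_1 \le t_1 < \min m_2 \le \underline{t}^{\sigma}(m_2) \le p_2$, and where the ``steal-the-middle'' claim is not quite right as stated (types in $m_1\cap m_2$ with valuation below $p_2$ are indifferent and need not switch to $m_2$); this redundancy does not affect the validity of the proof.
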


The implication of \Cref{Proposition-MonopolistUnidimensional} is that it suffices to look at partitional equilibria. The proof of \Cref{Proposition-MonopolistUnidimensional} proceeds in two steps. First, we show that it is without loss of generality to look at efficient equilibria: for any equilibrium in which there exists a type that is not purchasing the product, there exists an interim payoff-equivalent equilibrium in which that type fully reveals itself to the seller. Second, we show that for any efficient equilibrium, there exists a partitional equilibrium that is payoff-equivalent for almost every type. To prove this step, we show that in any efficient equilibrium, prices must be (weakly) decreasing in valuation because otherwise some type has a profitable deviation. 

How does the greedy segmentation compare to other partitional equilibria from the perspective of ex ante consumer surplus?\footnote{From an interim perspective, the greedy partition is Pareto efficient because any partition that differs from it must raise the lowest type in at least one segment, which increases the price in that segment.} A consideration that the greedy algorithm ignores is that it may benefit average prices to exclude some high types from a pool, making those types pay a higher price, and pool intermediate types with low types. We illustrate this below. 
\begin{example}\label{Example-GreedyFailure}
Suppose that the consumer's type is drawn from $\left\{1/3,2/3,1\right\}$ where $Pr(t=1)=1/6$, $Pr(t=2/3)=1/3+\epsilon$, and $Pr(t=1/3)=1/2-\epsilon$, 
where $\epsilon>0$ is small. The greedy construction sets the highest segment as $\{2/3,1\}$---because the seller's optimal posted price here would be $2/3$---and the 
next segment as $\{1/3\}$. This segmentation results in an average price of $\approx 1/2$. But a better segmentation for ex ante consumer surplus involves the high type 
perfectly separating as $\{1\}$, and the next highest segment being $\{1/3,2/3\}$. This segmentation reduces the average price to $4/9$. 
\end{example}

Generally, the optimal segmentation can be formulated as the solution to a constrained optimization problem over partitions that minimizes the average price subject to the constraint that the monopolist finds it optimal to price at the bottom of each segment. The greedy algorithm offers a simple program where that constraint binds in each segment and \Cref{Example-GreedyFailure} indicates that this may be sub-optimal. Identifying necessary and sufficient conditions on distributions when such constraints necessarily bind is challenging because it requires understanding in detail how sharply the monopolist's optimal price responds to truncating the distribution at different points. This exercise is difficult for distributions where we cannot solve for the optimal price in closed-form.\footnote{Without solving for the closed-form, we can verify that the greedy algorithm is optimal if (i) $F$ is convex, and (ii) the optimal price on an interval $[0,\tilde{v}]$, denoted by ${p}(\tilde{v})$, has a slope bounded above by $1$ and is weakly concave.} A class of distributions where a closed-form solution is available is that of power distributions; for this class, the greedy algorithm identifies the consumer-optimal segmentation. 

\begin{proposition}\label{Proposition-GreedyOptimalPower}
    Suppose that $[\lowv,\highv]=[0,1]$ and the cdf on valuations, $F(v)=v^k$ for $k>0$. Then the greedy segmentation is the consumer-optimal equilibrium segmentation. 
\end{proposition}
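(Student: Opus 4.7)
The plan is to combine \Cref{Proposition-MonopolistUnidimensional} with a dynamic-programming argument that exploits the scale invariance of the power family. By \Cref{Proposition-MonopolistUnidimensional} it suffices to restrict attention to partitional equilibria, which I parametrize by a decreasing sequence $1 = a_0 > a_1 > a_2 > \cdots$ with $a_s \to 0$, where segment $[a_s, a_{s-1}]$ is assigned price $a_s$. Because $p \mapsto p(a_{s-1}^k - p^k)$ is strictly concave with unique interior maximizer $\lambda a_{s-1}$, the seller's best-response condition (that pricing at the bottom of each segment is optimal) reduces to $a_s \geq \lambda a_{s-1}$, where $\lambda := (k+1)^{-1/k}$. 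A direct computation shows the greedy partition of \Cref{Proposition-MonopolistConsumerSegmentation} takes the form $a_s = \lambda^s$ and delivers ex ante expected price
\[
w \;:=\; \frac{\lambda(1-\lambda^k)}{1-\lambda^{k+1}} \;=\; \frac{\lambda k}{k+1-\lambda}.
\]
The goal is to show $w$ is a lower bound on the ex ante expected price across all admissible sequences.

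Writing $r_s := a_s/a_{s-1} \in [\lambda, 1]$, the ex ante expected price of an arbitrary partitional equilibrium equals $V = \sum_{s \geq 1} a_{s-1}^{k+1}\, r_s(1-r_s^k)$. Let $V_N$ denote the partial sum through index $N$. I plan to prove by induction on $N$ that $V_N + a_N^{k+1} w \geq w$; since $a_N \to 0$, letting $N \to \infty$ yields $V \geq w$. The base case $N=0$ is $w \geq w$. For the inductive step, using $a_N = a_{N-1} r_N$,
\[
V_N + a_N^{k+1} w \;=\; V_{N-1} + a_{N-1}^{k+1}\!\left[r_N(1-r_N^k) + r_N^{k+1} w\right] \;=\; V_{N-1} + a_{N-1}^{k+1}\, g(r_N),
\]
where $g(r) := r - r^{k+1}(1-w)$. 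The induction therefore closes as soon as I verify $g(r) \geq w$ for every $r \in [\lambda, 1]$.

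The crux is a concavity observation. Since $1 - w = (k+1)(1-\lambda)/(k+1-\lambda) > 0$, one has $g''(r) = -(k+1)k\, r^{k-1}(1-w) \leq 0$, so $g$ is concave on $[\lambda, 1]$. A direct computation using $\lambda^k = 1/(k+1)$ gives $g(\lambda) = \lambda(k+w)/(k+1)$, which simplifies to $w$ precisely under our formula $w = \lambda k/(k+1-\lambda)$; and $g(1) = w$ is immediate. Concavity combined with the coincidence $g(\lambda) = g(1) = w$ then yields $g(r) \geq w$ throughout $[\lambda, 1]$, completing the induction.

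The main conceptual hurdle is guessing the correct scaling $V^*(a) = a^{k+1}w$ for the value function and deriving the associated Bellman operator $g$; once $g$ is identified, the remaining verification is an elementary concavity argument. The feature of the power family that makes the greedy construction optimal---and that fails more generally, as \Cref{Example-GreedyFailure} illustrates---is exactly the coincidence $g(\lambda) = g(1)$: the binding constraint at each greedy segment is costless at the margin, so reallocating mass from a greedy segment to a later one yields no improvement.
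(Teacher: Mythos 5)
Your proof is correct, and it takes a genuinely different route from the paper's. The paper fixes the greedy cutoffs $\gamma^{-\ell}$ (with $\gamma=\lambda^{-1}$) in advance and argues \emph{locally}: on each greedy cell $S_\ell=[\gamma^{-(\ell+1)},\gamma^{-\ell}]$ it lower-bounds the price contribution of any competing partition by considering a binary split at an arbitrary $\tilde v$, showing the resulting contribution is concave in $\tilde v$ with equal values at the two endpoints, hence minimized there. You instead run a \emph{global} induction over the cells of the competing partition itself, guessing the scale-invariant value function $V^*(a)=a^{k+1}w$ and verifying the Bellman inequality $g(r)\geq w$ on the feasible set $r\in[\lambda,1]$. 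Interestingly, both arguments bottom out in the same mechanism---a concave function agreeing with the target at the two admissible extremes $r=\lambda$ and $r=1$---and your closing remark that the coincidence $g(\lambda)=g(1)$ is exactly what fails outside the power family is the right diagnosis (it is the analogue of the paper's footnote on when the greedy constraint can bind harmlessly). What your approach buys is a clean certificate of optimality (the value function) and a transparent treatment of partitions whose cells do not align with the greedy cutoffs, which the paper's cell-by-cell comparison handles only implicitly. One small caveat: your parametrization by a single decreasing sequence $a_0>a_1>\cdots\to 0$ of contiguous positive-length cells does not literally cover partitional equilibria containing continua of singletons (e.g.\ full revelation on a subinterval, which is supportable by skeptical beliefs); these are limits of your sequences with $r_s\to 1$, and since $g(1)=w$ the bound survives the limit---equivalently, a fully revealed block $[c,d]$ contributes $\tfrac{k}{k+1}(d^{k+1}-c^{k+1})\geq (d^{k+1}-c^{k+1})w$---but a sentence to this effect would close the argument; the paper's own proof is no more careful on this point.
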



\subsection{Discussion}\label{Section-Discussion}

Our analysis of the monopolistic setting concludes that (i) the combination of voluntary disclosure and personalized pricing does not benefit consumers if evidence is simple (\Cref{Proposition-Certification}), but (ii) it generates an interim Pareto improvement if evidence is rich (\Cref{Proposition-MonopolistConsumerSegmentation}). Thus, consumers' control over data benefits them when they can choose not only \emph{whether} to communicate but also \emph{what} to communicate. 

We have adopted an interpretation that considers the gains that consumers are able to obtain with different technologies. An alternative way to view our results is as a description of consumer payoffs across different kinds of equilibria. Suppose that the set of messages available to type $t$, $\messages(t)$, is richer than the rich evidence structure --- for instance, the set of all (Borel) subsets that contain $t$. A corollary of \Cref{Proposition-Certification} is that in any equilibrium in which all equilibrium path messages are either fully revealing or fully concealing (i.e., $m^*(t)\in \{\{t\},T\}$), no consumer type is better off than without personalized pricing. Analogously, the equilibrium constructed in \Cref{Proposition-MonopolistConsumerSegmentation} remains an equilibrium in this setting.\footnote{In both of these cases, the only adaptation that would have to be made is for the monopolist respond to a larger set of off-path messages; in each case, it suffices if the monopolist attributes every off-path message to a type with the highest valuation that could have sent it.}

\section{How Disclosure Amplifies Competition}\label{Section-Competition}


In many settings, consumers do not interact with only one seller but instead face a competitive market in which firms are differentiated. In this section, we investigate the conditions under which voluntary disclosure and personalized pricing benefit consumers in competitive settings with differentiated products. Our general analysis allows for two or more firms and a general kind of product differentiation that encompasses both horizontal and vertical components. But for expositional clarity, we begin with the case of Bertrand duopoly with horizontally differentiated products, since this analysis elucidates all of the key economic forces.


 \Cref{Section-CompetitionEnvironment} describes the Bertrand duopoly setting with horizontal differentiation and \Cref{Section-CompetitionEquilibria} constructs equilibrium segmentations with simple and rich evidence. \Cref{Section-CompetitionWelfare} compares the consumer's payoffs with those of a benchmark setting without personalized pricing. These sections generalize the example of \Cref{Section-ExampleCompetition} in which the consumer's location is uniformly distributed. \Cref{Section-CompetitionMoreThanTwo} considers the general setting with $n\geq 2$ firms. 

\subsection{Bertrand Duopoly with Horizontal Differentiation}\label{Section-CompetitionEnvironment}

Two firms, $L$ and $R$, compete to sell to a single consumer who has unit demand. The type of the consumer is her \emph{location}, denoted by $t$, which is drawn according to measure $\mu$ (and cdf $F$) with support $T$. We assume that $T\equiv [-1,1]$ and that $F$ is atomless with a strictly positive and continuous density $f$ on its support. The firms $L$ and $R$ are located at the two end points, respectively $-1$ and $1$, and each firm $i$ sets a price $p_i\geq 0$. The consumer has a value $V$ for buying the good that is independent of her type $t$, and faces a ``transportation cost'' when purchasing from firm $i$ that is proportional to the distance between her location and that of the firm, $\ell_i$.
Thus, her payoff from buying the good from firm $i$ at a price of $p_i$ is $V -|t - \ell_i| - p_i$.
As is standard, we assume that $V$ is sufficiently large that in the equilibria we study below, all types of the consumer purchase the good and no type is excluded from the market.\footnote{See \cite{osborne1987equilibrium}, \cite{caplin1991aggregation}, \cite{bester1992}, and \cite{peitz1997differentiated}. For most of our analysis, it suffices for $V\geq 2$, so that a consumer is always willing to purchase the good from the most distant firm if that distant firm sets a price of $0$. } 

\paragraph{Disclosure.} After observing her type, the consumer chooses a message $M$ that is feasible and available for her to send to each of the firms. As before, the set of feasible messages is $\messages^{\mathcal{F}} \equiv \left\{[a,b]:-1\leq a \leq b \leq 1\right\}$ where a message $[a,b]$ is interpreted as ``\textit{my type is in the interval $[a,b]$}." When a consumer's
type is $t$, the set of messages that she can send is $\messages(t) \subseteq \messages^{\mathcal{F}}$. We study two disclosure technologies:
\begin{itemize}[noitemsep]
    \item \textbf{simple} evidence messages for each type $t$, $\messages(t) = \{ [-1,1], \{t\} \}$.
    \item \textbf{rich} evidence messages for each type $t$, $\messages(t) = \{ [a,b] : a \leq t \leq b \}$.
\end{itemize}

Each evidence technology is identical to its counterpart in the monopolistic model when the type space is unidimensional. The novelty here is that the consumer now sends two messages---$M_L$ to firm L and $M_R$ to firm R---and each message is privately observed by its recipient.  Both messages come from the same technology but are otherwise unrestricted. For example, a consumer of type $t$ can reveal her type by sending the message $\{t\}$ to one firm while concealing it from the other firm using the message $[-1,1]$. 

\paragraph{Timeline and Equilibrium Concept.} The consumer first observes her type $t$ and then chooses a pair of messages $(M_L, M_R)$, each from $\messages(t)$.\footnote{While we have treated $t$ as the consumer's location, our analysis is also compatible with a setting where all that a consumer observes is a signal with her posterior expected location, like \cite{armstrongzhou2019}, and chooses whether and how to disclose that expected location using simple or rich evidence.} 
Each firm $i$ privately observe its message $M_i$ and sets price $p_i \geq 0$; price-setting is simultaneous. The consumer then chooses which firm to purchase the good from, if any. 

We study Perfect Bayesian Equilibria of this game. As is well-known \citep{osborne1987equilibrium,caplin1991aggregation}, the price-setting game in Bertrand competition with horizontal differentiation may lack a pure-strategy equilibrium for general distributions. By contrast, we show constructively that pure-strategy equilibria always exist when this market setting is augmented with a disclosure game.

\subsection{Constructing Equilibria with Simple and Rich Evidence}\label{Section-CompetitionEquilibria}
This section constructs equilibria of the disclosure game with simple and rich disclosure technologies for any distribution of consumer types. In both cases, we use the following strategic logic. Each consumer reveals her type to the firm that is more distant from her, indicating that she is ``out of reach." This distant firm then competes heavily for her business by setting a low price, which in equilibrium equals $ 0 $. The firm who does not obtain a fully revealing message infers that the consumer is closer to his location. Based on that inference, this seller sets a profit-maximizing price subject to the consumer having the option to buy from the other seller at a price of $0$. We use the assumption that $V\geq 2$ to guarantee that the consumer weakly prefers purchasing the good from the distant firm at a price of $0$ to not purchasing it at all. We begin our analysis with a fully revealing equilibrium in both simple and rich evidence environments, and then show how to improve upon it. 

\begin{proposition}
    \label{Proposition-DuopolistSimpleUnraveling}
    There exists a fully revealing equilibrium in both simple and rich evidence games: every type of consumer $t$ sends the message $\{t\}$ to each firm, and purchases from the firm nearer to her at a price of $2|t|$.
\end{proposition}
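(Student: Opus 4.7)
The plan is to construct the equilibrium explicitly, specifying behavior at every information set (on- and off-path) and then verifying sequential rationality for each player.

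First, I would specify the equilibrium behavior. On-path, type $t$ sends $(M_L, M_R) = (\{t\}, \{t\})$. After receiving any singleton message $\{s\}$, both firms play the unique pure-strategy equilibrium of the resulting complete-information pricing subgame: the nearer firm charges $2|s|$ and the farther firm charges $0$. For off-path messages---non-singleton intervals under rich evidence, and $[-1,1]$ under simple evidence---I specify ``pessimistic'' beliefs: upon receiving a non-singleton $M_L = [a,b]$, firm $L$ concentrates belief on the leftmost compatible type $a = \min M_L$, and symmetrically firm $R$ concentrates on $\max M_R$. Each firm's off-path price is then the best response to this belief, namely $\max\{-2\min M_L, 0\}$ for $L$ and $\max\{2\max M_R,0\}$ for $R$.

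Second, I would verify the firms' pricing. For any belief that the consumer is located at $s$, the pricing subgame is one of complete information. A standard Bertrand--Hotelling argument---the distant firm cannot profitably undercut its marginal-cost price of zero, while the near firm optimally extracts its transport-cost advantage of $2|s|$---pins down the unique pure-strategy prices as those in the construction. Applied to the believed type under each on-path or off-path information set, these prices are sequentially rational.

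Third, and the crux of the argument, I would check that no consumer type has a profitable deviation. On path, type $t$ nets $V-1-|t|$ by buying from the nearer firm at price $2|t|$. Any feasible deviation $(M_L, M_R)$ satisfies $t \in M_L \cap M_R$, hence $\min M_L \leq t \leq \max M_R$. Because the firms' (on-path) pricing rules are monotone in the believed location---$L$'s price weakly decreasing and $R$'s weakly increasing in $s$---the induced prices satisfy $p_L \geq p_L^\ast(t)$ and $p_R \geq p_R^\ast(t)$, where $p_L^\ast(t), p_R^\ast(t)$ are the on-path prices for true type $t$. Consequently, the consumer's payoff from buying from either firm under any deviation is bounded above by her on-path payoff from that firm, and taking the maximum preserves the bound. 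The assumption $V \geq 2$ ensures that the on-path purchase still dominates the outside option.

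The main obstacle is choosing off-path beliefs that simultaneously (i) make each firm's off-path price sequentially rational so that the profile is a PBE rather than mere Nash, and (ii) make those off-path prices weakly exceed the on-path prices for the true type so that deviations never pay. The min/max belief rule together with monotonicity of the Bertrand--Hotelling equilibrium prices in the believed location delivers both; the same construction works verbatim under simple and rich evidence, since the only off-path messages under simple evidence are the single interval $[-1,1]$, which satisfies $\min = -1$ and $\max = 1$ and is thus handled by the same formula.
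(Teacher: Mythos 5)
Your construction coincides with the paper's: the ``leftmost/rightmost compatible type'' beliefs are exactly the paper's skeptical beliefs $\delta_{\tau(i,M)}$ concentrating on the type in $M$ closest to firm $i$, the induced prices $\max\{2\tau(i,M)\ell_i,0\}$ are identical, and your monotonicity step is the paper's observation that $\tau(i,M)\ell_i\geq t\ell_i$ forces a weakly higher price under any deviation. The proof is correct and takes essentially the same approach as the paper.
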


The logic of \Cref{Proposition-DuopolistSimpleUnraveling} is straightforward. In an equilibrium where the consumer reveals her location to each firm,  both firms do not charge her strictly positive prices in equilibrium. Standard Bertrand logic implies that the distant firm must charge her a price of $0$ and the closer firm charges her the highest price that it can subject to the constraint that the consumer finds it incentive-compatible to purchase from the closer firm at that price.\footnote{Once types are revealed, these equilibrium prices necessarily coincide with those of \cite{thissevives1988}, where the consumer's type is common knowledge.} If the consumer deviates by sending a message $M$ that isn't a singleton to firm $i$, then firm $i$ believes that the consumer's type is the one in $M$ closest to $\ell_i$ and that the consumer has revealed her location to firm $j$. This equilibrium, thus, involves each seller holding skeptical beliefs that the consumer is as close as possible (given the message that is sent). 

This fully revealing equilibrium serves central types very well because they benefit from intense price competition. However, extreme types suffer from the firm closer to them being able to charge a high price. Ideally, types that are located close to firm $i$ would benefit from pooling with types more distant from firm $i$. The next result uses simple evidence to construct a partial pooling equilibrium that improves upon the fully revealing equilibrium for a strictly positive measure of types without making any type worse off. 

Our construction uses the following notation. Let $p^i_1$ be the lowest maximizer of $p\ell_i(F(\ell_i)-F(p\ell_i/2))$, and let $t^i_1\equiv{p^i_1\ell_i/2}$. To provide some intuition, $p^i_1$ is the (lowest) optimal price that firm $i$ charges if he has no information about the consumer's type and firm $j$ charges a price of $0$; in other words, this is firm $i$'s optimal \emph{local monopoly price} against an outside option where firm $j$ charges a price of $0$. At these prices, firm $i$ expects to sell to the consumer with probability $\ell_i(F(\ell_i)-F(p\ell_i/2))$, and $t^i_1$ is the most distant type from firm $i$ that still purchases from firm $i$. It is necessarily the case that $-1<t^L_1<0<t^R_1<1$. We use these types to describe our equilibrium.

\begin{proposition}\label{Proposition-DuopolistSimpleExistence}
    With simple evidence, there exists a partially pooling equilibrium in which the consumer's reporting strategy is
    		\begin{align*}
	\big(M^{\ast}_{L}(t), M^{\ast}_{R}(t) \big) = \begin{cases}
	\big( [-1,1], \{t\} \big) & \text{if } -1 \leq t \leq t^L_1, \\
	\big( \{t\}, \{t\} \big) & \text{if } t^L_1 < t < t^R_1, \\
	\big( \{t\}, [-1,1] \big) & \text{if } t^R_1 \leq t \leq 1,
	\end{cases}\end{align*}
and the prices charged by firm $i$ are
\begin{align*}
p_i^*(M) = \begin{cases}
\max\{2t\ell_i,0\} & \text{if } M=\{t\}, \\
p^i_1 & \text{otherwise.} 
\end{cases}\end{align*}  
In equilibrium, every consumer type purchases from the seller nearer to her.	
\end{proposition}

	
	

An intuition for \Cref{Proposition-DuopolistSimpleExistence} is as follows. If the consumer is centrally located---i.e., in $(t_L^1,t_R^1)$---she discloses her type (``track") to both firms. Such consumers then benefit from intense price competition, exactly as in the fully revealing equilibrium of \Cref{Proposition-DuopolistSimpleUnraveling}. If the consumer is not centrally located, she reveals her location to the firm farther from her but not to the nearer one. This private messaging strategy guarantees that the distant firm prices at zero and offers an attractive outside option. The firm that receives an uninformative (``don't track") message infers that the consumer is located sufficiently close but does not learn where. That firm then chooses an optimal local monopoly price given the outside-option price of zero. This pool of extreme types improves consumer welfare by guaranteeing that extreme consumer types can pool with moderate types, thereby decreasing type-contingent prices relative to the fully revealing equilibrium. \Cref{Figure-ExampleSimpleCompetition} in \Cref{Section-ExampleCompetition} depicts this segmentation for the case where the consumer's location is uniformly distributed.

One can do even better with rich evidence by using a segmentation that is analogous to the ``Zeno Partition'' constructed in \Cref{Section-GreedyPareto}. In this case, the central type $t=0$ obtains equilibrium prices of $0$ from each firm, and plays a role similar to the lowest type in the monopolistic setting. Accordingly, one sees a segmentation that goes from the extremes to the center, and becomes arbitrarily fine as one approaches the center. To develop notation for this argument, let us define a sequence of types $\{t_s^i\}_{s=0,1,2,\ldots}$ and prices and messages $\{p_s^i,M_s^i\}_{s=1,2,\ldots}$ where for every firm $i$ in $\{L,R\}$:
\begin{itemize}[noitemsep]
    \item $t_0^i=\ell_i$ and for every $s>0$, $t_s^i=p_s^i\ell_i/2$.
    \item $p_s^i$ is the lowest maximizer of $p\ell_i(F(t_{s-1}^i)-F(p\ell_i/2))$.
    \item $M_s^i\equiv\{t\in [-1,1]:t_s^i\ell_i\leq t\ell_i\leq t_{s-1}^i\ell_i\}$.
\end{itemize}
Let $p_\infty^i=0$ and let $M_\infty^i=\{0\}$. We have thus defined a sequences of cutoffs, prices, and messages where at every stage, we are constructing segments greedily: given a segment $M_s^i$, firm $i$ charges the price that is the optimal local monopoly price (assuming that the other firm charges a price of $0$), and at this price, firm $i$ services all consumer types in $M_s^i$. Because rich evidence allows consumers to disclose intervals directly, our disclosure strategy need not be asymmetric (unlike our analysis of the segmentation with simple evidence): a consumer of type $t$ can send the message $M_s^i$ that contains $t$ to both firms. We use this notation to prove our result below. 

\begin{proposition}
	\label{Proposition-DuopolistRichExistence}
	With rich evidence, there exists a segmentation equilibrium in which a consumer's reporting strategy is to send message $M^*(t)$ to both firms where 
	\begin{align*}
		M^*(t) = \begin{cases}
		 M_s^i  & \text{if }  t_s^i\ell_i < t\ell_i \leq t_{s-1}^i\ell_i \\
		 M_{\infty}^i  & \text{if } t = 0.  
	\end{cases}\end{align*}    
When receiving an equilibrium disclosure of the form $ M_s^i $, firm $i$ charges a price of $p_s^i$ and firm $j$ charges a price of $0$.
\end{proposition}

This equilibrium construction highlights the versatility of rich evidence disclosure. While the competitive environment differs from the monopolistic setting in many ways, the logic of the ``Zeno Partition" strategy follows in much the same way. Consumers with the highest willingness to pay for the good from firm $i$ are segmented together and send messages $M_1^i $. That message induces a price of $0$ by firm $j$ and given that outside option, firm $i$ charges a price that makes indifferent the marginal consumer type in $M_1^i$, who has the lowest willingness to pay for firm $i$'s product. Prices diminish as the consumer types become closer to the center. As such, the segmentation follows iteratively from both sides of $ 0 $ exactly as in ``Zeno."\footnote{For the uniform distribution, the construction mirrors that in \Cref{Section-Example} where $t_s^i=\ell_i(1/2)^s$. } We depict this segmentation strategy in \Cref{Figure-RichCompetition}.

\begin{figure}[h]
\centering
{	\begin{tikzpicture}[scale=0.7]
		\draw[ultra thick, fill=PeachPuff] (15,0) rectangle (20,1); 
		\draw[ultra thick, pattern=north west lines, pattern color=Orchid] (5,0) rectangle (7.5,1); 
		\draw[ultra thick, pattern=grid, fill=yellow] (7.5,0) rectangle (8.75,1); 
		\draw[ultra thick, pattern=north west lines, fill=gray] (8.75,0) rectangle (11.25,1); 
		\draw[ultra thick, pattern=grid, fill=SkyBlue] (11.25,0) rectangle (12.5,1);
		\draw[ultra thick, pattern=dots, pattern color =gray] (12.5,0) rectangle (15,1);		

		\draw[ultra thick] (0,0) rectangle (5,1); 
		\draw (10,-0.2)--(10,0.2);
		\node at (10,-0.6) {$0$};
		\draw (5,-0.2)--(5,0.2);
		\node at (5,-0.6) {$t_1^L$};
		\draw (7.5,-0.2)--(7.5,0.2);
		\node at (7.5,-0.6) {$t_2^L$};
		\draw (8.75,-0.2)--(8.75,0.2);
		\node at (8.75,-0.6) {$t_3^L$};
	    \draw (15,-0.2)--(15,0.2);
		\node at (15,-0.6) {$t_1^R$};		
	    \draw (12.5,-0.2)--(12.5,0.2);
		\node at (12.5,-0.6) {$t_2^R$};
        \draw (11.25,-0.2)--(11.25,0.2);
		\node at (11.25,-0.6) {$t_3^R$};		
		\draw (0,-0.2)--(0,0.2);
		\node at (0,-0.6) {$-1$};
		\draw (20,-0.2)--(20,0.2);
		\node at (20,-0.6) {$1$};
 		\node[blue] at (-0.5,0.5) {$L$};
 		\node[blue] at (20.5,0.5) {$R$};
 		\node at (2.5,0.5){{\footnotesize $M^L_1$}};
 		\node at (17.5,0.5){{\footnotesize $M^R_1$}};
 		\node at (6.25,0.5){{\footnotesize $M^L_2$}};
 		\node at (8.125,0.5){{\footnotesize $M^L_3$}}; 		
 		\node at (13.75,0.5){{\footnotesize $M^R_2$}};
 		\node at (11.875,0.5){{\footnotesize $M^R_3$}};		
 	\end{tikzpicture}	
	}
\vspace{-0.1in}
\caption{The figure shows a segmentation using rich evidence. The types in $(t_3^L,t_3^R)$ are partitioned into countably infinitely many segments, and hence these segments are omitted.}\label{Figure-RichCompetition}
\end{figure}
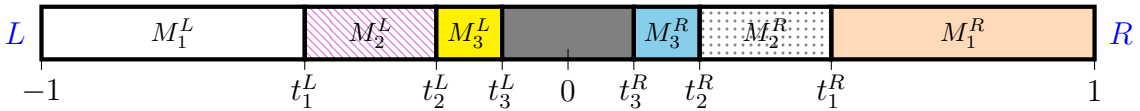

We have constructed equilibria with simple and rich evidence but we do not argue that these equilibria are consumer-optimal: an equilibrium segmentation (with either simple or rich evidence) may generate segments that induce each firm to randomize in its pricing strategy. Characterizing or bounding prices across mixed strategy equilibria across segments appears intractable.\footnote{Restricting attention to segmentations that generate pure-strategy equilibria, we conjecture that the equilibrium that we construct in \Cref{Proposition-DuopolistSimpleExistence} is consumer-optimal in the game with simple evidence; similarly, we conjecture the same regarding the equilibrium constructed in the game with rich evidence whenever the greedy algorithm yields an optimal segmentation.} Instead, we compare these equilibria to that of a benchmark model without personalized pricing and show that these equilibria generate strict interim Pareto gains.

\subsection{Benefits of Personalized Pricing in Competitive Markets}\label{Section-CompetitionWelfare}

The benchmark is the standard model of Bertrand pricing with horizontal differentiation: each firm $i$ sets a uniform price $p_i$ and the consumer buys from one of the firms. Unfortunately, this game may lack a pure-strategy equilibrium (in prices), and characterizing the mixed strategy equilibria is challenging. Accordingly, we impose a distributional assumption that is standard in this setting: we assume that $f$ is symmetric around $0$ and is strictly log-concave. This assumption guarantees the existence and uniqueness of a symmetric pure strategy equilibrium \citep{caplin1991aggregation}, and is compatible with many standard distributions \citep{bagnoli2005log}.

With this assumption, a symmetric pure-strategy equilibrium in this benchmark setting exists and involves each firm charging a price of $p^*$ where 
\begin{align*}
    p^*\equiv\arg\max_p pF\left(\frac{p^*-p}{2}\right)=\frac{2F(0)}{f(0)},
\end{align*}
where the first equality is firm $L$'s profit maximization problem, and the second comes from solving its first-order condition and substituting $p=p^*$.\footnote{As before, we assume that $V$ is sufficiently high that all consumers purchase at these prices. It suffices that $V>(f(0))^{-1}+1$.} Our welfare result compares this price to those of the equilibria constructed in \Cref{Section-CompetitionEquilibria}.
\begin{proposition}\label{Proposition-DuopolyWelfare}
    If $f$ is symmetric around $0$ and log-concave, then every type has a strictly higher payoff in the equilibria of the simple and rich evidence games constructed in \Cref{Proposition-DuopolistSimpleExistence,Proposition-DuopolistRichExistence} than in the benchmark setting without personalized pricing.
\end{proposition}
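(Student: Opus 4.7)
The plan is to compute each type's payoff in the benchmark and in the two disclosure equilibria, and then reduce the entire proposition to a single price inequality: $p_1^R < p^*$ (and by symmetry $p_1^L < p^*$). In the benchmark, the symmetric FOC $F(0)-p^* f(0)/2 = 0$ combined with $F(0)=1/2$ gives $p^* = 1/f(0)$, and each type $t$ buys from the nearer firm at $p^*$, obtaining payoff $V-1-|t|-p^*$. In both disclosure equilibria the nearer firm is still the one the consumer buys from and the distance to that firm is unchanged, so the welfare comparison reduces to a price comparison.

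The key step is $p_1^R < p^*$. The FOC for $p_1^R = \arg\max_p p(1-F(p/2))$ yields $p_1^R/2 = h(p_1^R/2)$, where $h(z)\equiv (1-F(z))/f(z)$ is the (scaled) inverse hazard rate. By symmetry of $f$, $h(0) = (1-F(0))/f(0) = 1/(2f(0)) = p^*/2$. Strict log-concavity of $f$ implies, via \citet{bagnoli2005log}, that $1-F$ is strictly log-concave, so $h$ is strictly decreasing. Since $p_1^R > 0$ (the derivative of $p(1-F(p/2))$ at $p=0$ equals $1/2>0$), this forces $p_1^R/2 = h(p_1^R/2) < h(0) = p^*/2$, i.e., $p_1^R < p^*$. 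Symmetry yields $p_1^L = p_1^R < p^*$.

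With this in hand, the simple-evidence analysis of \Cref{Proposition-DuopolistSimpleExistance} is quick: a pooled type $t \in [-1,t_1^L]\cup[t_1^R,1]$ pays $p_1^i < p^*$ to the nearer firm, a strict improvement; a fully-disclosing type $t\in(t_1^L,t_1^R)$ pays $2|t|$ (with $0$ charged by the distant firm), and $|t| < |t_1^i| = p_1^i/2$ gives $2|t| < p_1^i < p^*$, again a strict improvement.

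For the rich-evidence equilibrium of \Cref{Proposition-DuopolistRichExistance}, I would prove by induction that $p_s^R < p_{s-1}^R$ for each $s\geq 2$: the objective $p(F(t_{s-1}^R)-F(p/2))$ vanishes at both endpoints $p=0$ and $p = 2t_{s-1}^R = p_{s-1}^R$ and is strictly positive in between, so its maximizer lies strictly below $p_{s-1}^R$. Iterating, $p_s^R < p_1^R < p^*$ for all $s$ (and symmetrically for $L$), while the central type pays $0 < p^*$. Hence every type strictly improves over the benchmark. The main obstacle is the inequality $p_1^R < p^*$, which rests crucially on strict monotonicity of the inverse hazard rate---equivalently, strict log-concavity of $1-F$---and would fail in its absence: without log-concavity the local-monopoly price $p_1^R$ could exceed the competitive price $p^*$, overturning the result.
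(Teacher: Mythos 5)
Your proposal is correct and takes essentially the same approach as the paper: both reduce the claim to the single inequality $p_1^i<p^*$, obtained from the first-order condition for $p_1^i$ together with the monotonicity of a hazard-type ratio implied by log-concavity (you use $(1-F)/f$ decreasing for firm $R$, the paper uses $F/f$ increasing for firm $L$; under symmetry these are mirror images), and then observe that every on-path price in both constructed equilibria is weakly below $p_1^i$. Your explicit endpoint argument for $p_s^R<p_{s-1}^R$ in the rich-evidence case just fills in a step the paper leaves as ``an analogous argument.''
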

The logic of \Cref{Proposition-DuopolyWelfare} is that the price in the benchmark setting ($p^*$) is strictly higher than $p_1^i$, the price charged by firm $i$ to a consumer who conceals her type from firm $i$ in the equilibrium of the simple evidence game (\Cref{Proposition-DuopolistSimpleExistence}). The consumer must then be better off because this price ($p_1^i$) is strictly higher than all other equilibrium path prices both in this equilibrium and in the equilibrium that we construct in the rich evidence game. We illustrate the welfare gains from simple evidence in \Cref{Figure-DuopolyWelfare} for the case of the uniform distribution.

\begin{figure}[t]
\centering
{	\begin{tikzpicture}[scale=1]
	\draw (0,6)-- (0,0) --  (10,0) -- (10,6);
	\draw (0 cm,3pt) -- (0 cm,-3pt) node[anchor=north] {$-1$};
	\draw (10 cm,3pt) -- (10 cm,-3pt) node[anchor=north] {$1$};	
	\draw (5 cm,3pt) -- (5 cm,-3pt) node[anchor=north] {$0$};
	\draw (2.5 cm,3pt) -- (2.5 cm,-3pt) node[anchor=north] {$-\frac{1}{2}$};
	\draw (7.5 cm,3pt) -- (7.5 cm,-3pt) node[anchor=north] {$\frac{1}{2}$};	
	\draw (-1,3) node[rotate=90]{Price + Transport Cost};	
	\draw (5,-1.2) node{Consumer's Type / Location};
	\draw (5,5) node{{\footnotesize No Personalized Pricing}};
	\draw (5,1.5) node{{\footnotesize Simple Evidence}};
	\draw (-4pt,6) -- (3pt,6);
	\draw (-8pt,6) node {$3$};
	\draw (-4pt,4) -- (3pt,4);
	\draw (-8pt,4) node {$2$};
	\draw (-4pt,2) -- (3pt,2);
	\draw (-8pt,2) node {$1$};
{\draw[red] (0,4)--(5,6)--(10,4);}
{\draw[black,densely dashed] (0,4)--(5,2)--(10,4);}
{\draw[blue,ultra thick] (0,2)--(2.5,3)--(5,2)--(7.5,3)--(10,2);}
	\end{tikzpicture}}
\vspace{-0.1in}	
\caption{The figure compares the interim equilibrium cost (incl. price and transport cost) in the setting without personalized pricing with that of the equilibrium constructed in the simple evidence game (\Cref{Proposition-DuopolistSimpleExistence}) for uniformly distributed types. Simple evidence reduces the expected cost by $50\%$. }\label{Figure-DuopolyWelfare}	
	\end{figure}
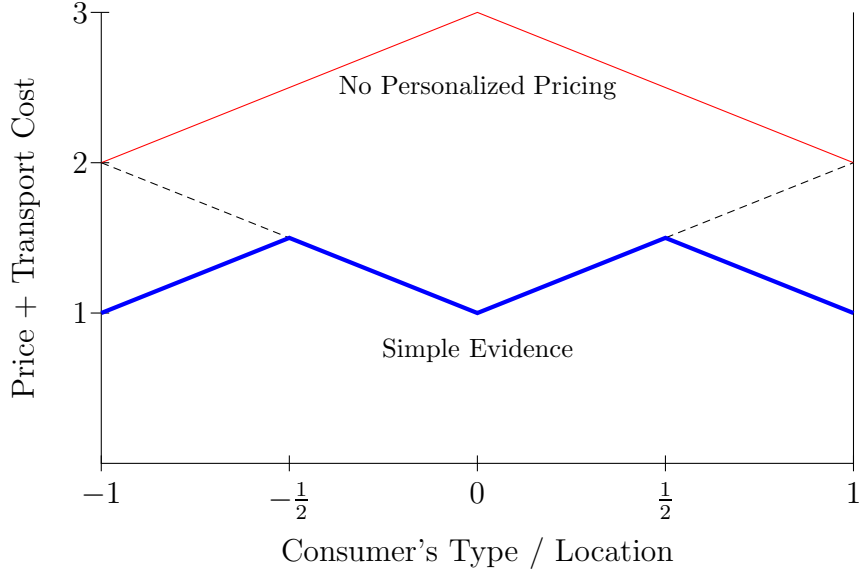

The ability for extreme types to pool is needed for these interim Pareto gains. These gains would not generally emerge if the consumer fully revealed her type: apart from the uniform distribution, any symmetric log-concave density has $f(0)>1/2$ and therefore, prices in the benchmark are strictly less than $2$. By contrast, in the fully revealing equilibrium, the extreme types pay a price of $2$. Thus, these interim Pareto gains emerge when consumers can disclose or conceal evidence about their types, and not necessarily when types are commonly known. 

\subsection{Competition Between More than Two Firms}\label{Section-CompetitionMoreThanTwo}

In this section, we show how our prior conclusions generalize to the case of multiple firms that produce differentiated products. Although the economic intuition for how disclosure amplifies competition remains the same, the arguments and notation are necessarily more involved to account for the higher dimensionality introduced by there being more firms. 

Suppose that the set of firms is $N\equiv\{1,\ldots,n\}$ where the number of firms, $n$, is at least $2$. The consumer has a valuation for each of these products, encoded in its type, $t\equiv (t_1,\ldots,t_n)$, where $t_i$ is the consumer's value for the good produced by firm $i$. We assume that $t$ is drawn according to distribution $\mu$ whose support is $T\equiv\left[\underline{t},\overline{t}\right]^n$, where $0<\underline{t}<\overline{t}$; for simplicity, we assume that $\mu$ has a strictly positive and continuous density $f$ on its support. We say that firm $i$ is type $t$'s \emph{favorite firm} if $t_i$ is weakly higher than $t_j$ for every firm $j$. 

We first construct a partial pooling equilibrium using simple evidence. We then construct an equilibrium with rich evidence. Finally, we compare the equilibrium prices under the two technologies to the model without personalized pricing. 

\paragraph{Simple Evidence:} For each type $t$, the set of messages available to the consumer  is $\messages(t)=\{T,\{t\}\}$. Messages are private and $M_i(t)$ denotes the message sent to firm $i$. It is useful to define a demand function for firm $i$ assuming that all other firms charge a price of $0$. For every non-negative price $p$, let
\begin{align*}
    Q^i(p)&\equiv\mu\left(\{t:t_i-p\geq \max_{j\neq i}t_j\}\right).
\end{align*}
The  demand $Q^i(p)$ is the probability that the consumer purchases from firm $i$ at a price of $p$ when all other firms charge $0$. Let $p^i_1$ denote the lowest maximizer of $pQ^i(p)$. The following result constructs an equilibrium with simple evidence.
\begin{proposition}\label{Proposition-OligopolySimpleExistence}
    With simple evidence, there exists an equilibrium in which the consumer's reporting strategy is
		\begin{align*}
	M^{\ast}_{i}(t)= &\begin{cases}
	\{t\} & \text{if } t_i-p^i_1< \max_{j\neq i} t_j, \\
	T & \text{otherwise.}
	\end{cases}
\end{align*}
{The prices charged by firm $i$ are}
\begin{align*}
p_i^*(M) = &\begin{cases}
\max\{0,t_i-\max_{j\neq i} t_j\} & \text{if } M=\{t\}, \\
p^i_1 & \text{otherwise.} 
\end{cases}\end{align*}  
In equilibrium, every consumer type purchases from its favorite firm.
\end{proposition}
This equilibrium generalizes the simple partitional form of \Cref{Proposition-DuopolistSimpleExistence} and \Cref{Figure-ExampleSimpleCompetition}. The consumer sends the non-discriminatory message $T$ only when she has strong preferences for the product of a particular firm. Following these messages, whenever firm $i$ receives the message $T$, it infers that the consumer has a strong preference for its product and that the consumer has sent a fully revealing message to every other firm. Anticipating that every other firm is then charging a price of $0$, firm $i$ then charges its optimal local monopoly price, which is $p_1^i$. When the consumer has mild preferences for the products of each firm, she sends the fully revealing message to each. The firms then compete for her business using standard Bertrand prices (with differentiated products). We omit a proof of this result as it is nearly identical to that of \Cref{Proposition-DuopolistSimpleExistence}.

\paragraph{Rich Evidence:}

We now construct an equilibrium using the rich evidence structure of \Cref{Section-MonopolistModel} where $\messages(t)$, the set of messages that the consumer can send when her type is $t$, is the set of all closed and convex subsets of $T$ that contain $t$. We partition $T$ on the basis of each consumer type's favorite firm, and the closest competitor that would like to poach that type's business. For each consumer type $t$, let $\alpha(t)$ denote type $t$'s favorite firm and $\beta(t)$ denote type $t$'s favorite among the remaining firms, breaking ties in favor of the firm that has the lowest index in each case.\footnote{Formally, treating $\min$ as the lowest element of a set, $\alpha(t)=\min {\{i\in \{1,\ldots,n\}:t_i \geq t_j\text{ for all }j\}}$, and $\beta(t)=\min {\{i\in \{1,\ldots,n\}\backslash\{\alpha(t)\}:t_i \geq t_j\text{ for all }j\neq \alpha(t)\}}$. These tie-breaking rules are for completeness, but any tie-breaking rule that preserves $\alpha(t)\neq \beta(t)$ suffices.} It is helpful to define a localized demand function: for every non-negative price $p$ and every (Borel) set $A$,
\begin{align*}
    Q^i(p,A)\equiv \mu \left(\{t\in A:t_i - p\geq \max_{j\neq i} t_j\}\right).
\end{align*}
This term is the probability that the consumer purchases from firm $i$ at price $p$ when the consumer's type is in $A$ and every other firm is charging a price of $0$.

For each firm $i$ and competitor $j$, define a sequence of prices $\{p^{ij}_s,\}_{s=0,1,2,\ldots}$ and sets of types $\{M^{ij}_s\}_{s=0,1,2,\ldots}$ such that $p^{ij}_0=\overline t-\underline t$ and 
\begin{align*}
    M^{ij}_s &\equiv \left\{t\in T:i=\alpha(t),j=\beta(t),t_i-t_j\leq p^{ij}_{s-1}\right\},\\
    p^{ij}_{s}&\text{ is the smallest maximizer of } pQ(p,M^{ij}_s).
\end{align*}
The set $M^{ij}_s$ denotes all consumer types for which $i$ is the favorite, $j$ is the second favorite, and each type is willing to pay as much as $p^{ij}_{s-1}$ to obtain the product from firm $i$. As we show in the Appendix, this is a convex set. The price $p^{ij}_s$ is the local monopoly price charged by firm $i$ when it knows that the consumer's type is in $M^{ij}_s$. Denoting the set of types for whom firms $i$ and $j$ are two equally favorite firms by $M^{ij}_\infty \equiv \{t\in T:i=\alpha(t),j=\beta(t),t_i=t_j\}$ so that $p^{ij}_\infty =0$, we can now state the following result.
\begin{proposition}\label{Proposition-OligopolyRichExistence}
    With rich evidence, there exists an equilibrium in which a consumer's reporting strategy is 
\begin{align*}    
M^*(t) = \begin{cases}
                       M^{ij}_s & \text{if }t\in M^{ij}_s\text{ and }t_i-t_j>p^{ij}_s, \\
                        M^{ij}_\infty & \text{if }t\in M^{ij}_\infty.
                    \end{cases}\end{align*}    
When receiving an equilibrium disclosure of the form $M^{ij}_s$, firm $i$ charges a price of $p^{ij}_s$, and all other firms charge a price of $0$. The consumer always purchases the good from her favorite firm. 
\end{proposition}
The construction here merges that of \Cref{Proposition-MonopolistConsumerSegmentation} and \Cref{Proposition-DuopolistRichExistence}: consumer types send messages that reveal their two favorite firms as well as a maximal willingness to pay for the good produced by their favorite firm. All non-favorites charge a price of $0$ and the favorite firm charges a price corresponding to the local monopoly price.

\paragraph{Benefits of Personalized Pricing:} We show that the equilibria constructed in \Cref{Proposition-OligopolySimpleExistence,Proposition-OligopolyRichExistence} improve consumer surplus relative to the benchmark of no personalized pricing. For the benchmark model, to guarantee full market coverage, we assume that the consumer has to purchase the product from one firm. We say that the distribution is symmetric if whenever $t'$ is a permutation of $t$, $f(t')=f(t)$. 
\begin{proposition}\label{Proposition-OligopolyWelfare}
If $f$ is symmetric and log-concave, then every type has a strictly higher payoff in the equilibria constructed in \Cref{Proposition-OligopolySimpleExistence,Proposition-OligopolyRichExistence} than in the benchmark setting without personalized pricing.   
\end{proposition}
Thus, we see that our conclusions from Bertrand duopoly with horizontal differentiation apply broadly: voluntary disclosure with either simple or rich evidence can amplify competition and generate gains for consumers. Our analysis thus highlights how consumer control through a simple track / do-not-track technology is sufficient in competitive markets.

\section{Conclusion}
\label{Section-Conclusion}

As the digital economy matures, policymakers and industry leaders are working to establish norms and regulations to govern data ownership and transmission. In light of the privacy and distributional concerns that this issue raises, we set out to study the question: \emph{do consumers benefit from personalized pricing when they have control over their data?} We frame and answer this question using the language of voluntary disclosure, building on a rich theoretical literature on evidence and hard information. 

Our initial instinct was that voluntary disclosure would not help. As the market draws inferences based on information that is \emph{not} disclosed, giving consumers the ability to separate themselves would seem to be self-defeating. To put it differently, if the market necessarily unravels as in \cite{grossman:81-JLE} and \cite{milgrom1981good}, consumers retain no surplus and may be worse off with personalized pricing.
We show that this conclusion is incorrect because it omits two important strategic forces present in market interactions.  

First, one can construct pools in both monopolistic and competitive settings in which the consumer lacks an incentive to separate herself from the pool. These pools are simple, do not require commitment, and depend only on willingness-to-pay rather than on intricate details of the type space. Second, when facing multiple firms, voluntary disclosure and personalized pricing can amplify competitive forces. By revealing features of one's preferences to the market, the consumer obtains a significant price concession from a less competitive firm that forces the more competitive firm to also lower its price. 

We have examined these basic strategic considerations through the lens of a stylized model, and it is worth noting important caveats to our conclusions. 

One consideration is that voluntary disclosure and personalized pricing can generate multiple equilibria. We view our analysis as identifying possibility results for when consumers can be made better off. Our results show that identifying these possibilities is subtle: simple evidence does not benefit the consumer in any equilibrium of a  monopolistic market but can do so in competitive markets. But translating these possibilities for improvement into actual gains for consumer surplus requires the coordination of their information-sharing, and here, we see an important role for intermediaries, platforms, and regulation.

A different concern is that while our model permits a consumer to disclose her valuation perfectly, consumer tracking may be limited so that at best, the consumer can disclose information about her type up to some coarse partition. While the exact analysis of such a setting would depend on the details, we would expect our main qualitative conclusions to still hold: consumer control over all-or-nothing disclosures (like simple evidence) will have a stronger effect in competitive markets than when there is a high degree of market power. 

Similarly, one may worry that, if consumers internalize a cost for ceding their privacy when they disclose information, our conclusions may not go through. In the monopolistic setting, this is true: consumers at the bottom of each messaging pool may now receive a negative payoff from remaining in their pool and prefer not to disclose at all. Ultimately, the \emph{only} strategy consistent with equilibrium is for there to be no disclosure, and so no evidence structure --- simple or rich --- could induce a consumer welfare improvement. However, this is not true of the competitive setting. Because of the price competition that disclosure can induce, the consumer always enjoys a strict improvement from disclosing information. Thus, so long as the privacy cost is sufficiently low, all of our results for competitive markets remain. Combining these observations, the addition of an exogenous value for privacy strengthens our message about the sharp differences between consumer control in monopolistic and competitive markets.

A natural question to follow asks whether the technologies envisioned by the simple and rich evidence structures in our model are feasible in real data-sharing settings. It is not difficult to imagine data-sharing tools that fit our definitions: track/do-not-track is already a part of GDPR, and a simple slider system in which consumers report the group or range that they belong to on different dimensions would fit the convexity requirement of rich evidence. Thus, practical feasibility relies on two features: the ability to transmit verifiable information, and the feasibility of implementing an equilibrium. 

As argued by \cite{goldfarbtucker} and others, an important element in the ongoing evolution of the digital economy is its increasing ability to verify information. These advances suggest that it may be technologically feasible for an intermediary --- whether a private platform or a government operated service --- to verifiably disclose aspects of consumers' preferences to sellers. Already, regulations such as GDPR have required platforms to allow consumers to opt out of being tracked. The ways in which such policies and technologies are crafted can influence not only how data is shared, but also what data will be shared in equilibrium, facilitating coordination across consumers.  Our work suggests that designing these tools with a priority to give consumers control over their data may make personalized pricing attractive and improve consumer welfare.


{\small \begin{singlespace}
    \addcontentsline{toc}{section}{References}
    \bibliographystyle{ecta}
    \bibliography{alv}
\end{singlespace} }


\appendix
\section{Appendix}

\begin{proof}[Proof of \cref{Proposition-Certification} on p. \pageref{Proposition-Certification}]
Consider an equilibrium. Let $\tilde{T}$ be the set of types that in equilibrium send the non-disclosure message, $T$. Thus, every type in $T\backslash\tilde{T}$ sends a message that fully reveals itself. Sequential rationality demands that the monopolist charges a price of $v(t)$ to every such type, leading to an interim payoff of $0$. We prove below that the non-disclosure message must induce a price that is no less than $p^*$. 

Suppose towards a contradiction that it leads to a price $\tilde{p}$ that is strictly less than $p^*$. In equilibrium, if $v(t)>\tilde{p}$, the consumer must be sending the non-disclosure message $T$ (because sending the message $\{t\}$ leads to a payoff of $0$).
Therefore, in equilibrium, 
\begin{align*}
    \tilde{T}\supseteq \{t\in T:v(t)>\tilde{p}\} \supseteq \{t\in T:v(t)\geq p^*\}.
\end{align*}
By charging a price of $\tilde{p}$, the firm's payoff is
\begin{align*}
    \tilde{p} \mu(\{t\in \tilde T: v(t)\geq \tilde{p}\})&\leq \tilde{p} \mu(\{t\in T: v(t)\geq \tilde{p}\})\\&<{p}^* \mu(\{t\in T: v(t)\geq {p}^*\})\\&={p}^* \mu(\{t\in \tilde{T}: v(t)\geq {p}^*\}),
\end{align*}
where the weak inequality follows from $\tilde{T}\subseteq T$, the strict inequality follows from $p^*$ being the (lowest) optimal price, and the equality follows from $\{t\in T:v(t)\geq p^*\}\subseteq \tilde{T}$. Therefore, the monopolist gains from profitably deviating from charging $\tilde{p}$ to a price of $p^*$ when facing the non-disclosure measure, thereby rendering a contradiction. 
\end{proof}
\begin{proof}[Proof of \cref{Proposition-MonopolistConsumerSegmentation} on p. \pageref{Proposition-MonopolistConsumerSegmentation}]
We augment the description of the strategy-profile with the off-path belief system where when the seller receives a message $M\notin\left(\cup_{s=1,\ldots,S} M_s\right)\cup M_\infty$, she puts probability $1$ on a type in $M$ with the highest valuation (i.e. a type in $\arg\max_{t\in M}v(t)$), and charges a price equal to that valuation. 

Observe that the seller has no incentive to deviate from this strategy-profile because for each (on- or off-path) message, the price that he is prescribed to charge in equilibrium is her optimal price given the beliefs that are induced by that message. 

We consider whether the consumer has a strictly profitable deviation. Let us consider on-path messages first. Consider a consumer type $t$ that is prescribed to send message $M_s$ where $p_s<v(t)\leq p_{s-1}$. Sending any message of the form $M_{s'}$ where $s'<s$ results in a higher price and therefore is not a profitable deviation. All messages of the form $M_{s'}$ where $s'>s$ are infeasible because $t\notin M_{s'}$ for any $s'>s$. Finally, if the type $t$ is such that she is prescribed to send message $M_\infty$, her equilibrium payoff is $0$, and sending any other message results in a weakly higher price. Thus, the consumer has no profitable deviation to any other on-path message. There is also no profitable deviation to any off-path message: because for any set $M$ that contains $t$, $v(t)\leq \max_{t'\in M}v(t')$, any off-path message is guaranteed to result in a payoff of $0$.
\end{proof}
\begin{proof}[Proof of \cref{Proposition-MonopolistUnidimensional} on p. \pageref{Proposition-MonopolistUnidimensional}]

Consider an equilibrium $\sigma$. Let $m^\sigma(t)$ denote the message reported by type $t$, let $F^\sigma_m\in \Delta[0,1]$ denote the firm's belief when receiving message $m$ and $\underline{t}^{\sigma}(m)$ be the lowest type in the support of that belief, and let $p^\sigma(m)$ be the sequentially rational price that he charges. In any equilibrium, $p^\sigma(m)\geq \underline{t}^{\sigma}(m)$, because otherwise the firm has a profitable deviation. We say that a message is an equilibrium-path message if there exists at least one type that sends it, and a price is an equilibrium-path price if there exists at least one equilibrium-path message that induces the firm to charge that price.
\begin{lemma}[Efficiency Lemma]\label{Lemma-MonopolistEfficiency}
	For any equilibrium $\sigma$, there exists an equilibrium that is efficient that results in the same payoff for every consumer type.  
\end{lemma}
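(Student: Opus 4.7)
The plan is to construct $\tilde\sigma$ from $\sigma$ by having every type that fails to trade in $\sigma$ switch to the fully-revealing message $\{t\}$, while leaving every trading type's message unchanged. Since a non-trading type earns payoff $0$ in $\sigma$ and will earn payoff $0$ after fully revealing (the monopolist optimally charges $p = t = v(t)$, and the consumer buys by the tie-breaking convention), payoffs are preserved for non-trading types. Trading types will be shown to face the same equilibrium-path price as in $\sigma$, so their payoffs are preserved as well.

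The key verification is that the monopolist's sequentially rational price $p^\sigma(m)$ remains optimal after his beliefs on each on-path pool message $m$ are updated to condition on trading types only. Let $T_m$ be the set of types sending $m$ in $\sigma$, and let $T_m^+ \equiv \{t \in T_m : v(t) \geq p^\sigma(m)\}$ be the trading subset. Sequential rationality of $\sigma$ at $m$ gives, for every $p' > p^\sigma(m)$,
\[
p' \cdot \mu(\{t \in T_m : v(t) \geq p'\}) \;\leq\; p^\sigma(m) \cdot \mu(T_m^+).
\]
Dividing through by $\mu(T_m^+)$ shows that $p^\sigma(m)$ remains an optimal price under the posterior that puts mass only on $T_m^+$; raising the price cannot help, and charging any $p' < p^\sigma(m)$ can only (weakly) lower revenue since the sale probability was already $1$ on $T_m^+$ at price $p^\sigma(m)$. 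Thus, prescribing the monopolist to charge $p^\sigma(m)$ at each on-path pool message is sequentially rational in $\tilde\sigma$. At the newly on-path singleton messages $\{t\}$, charging $v(t) = t$ is trivially optimal.

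Next I would check that no consumer type has a profitable deviation in $\tilde\sigma$. A trading type $t$ cannot deviate to a singleton $\{t'\}$ with $t' \neq t$ (not feasible under rich evidence) and gets payoff $0$ from $\{t\}$; any deviation to an on-path pool message $m'$ yields the same price $p^\sigma(m') = p^{\tilde\sigma}(m')$ as in $\sigma$, which was already not profitable; off-path deviations can be deterred using the same skeptical beliefs as in $\sigma$ (or the maximal-valuation belief on the message's support). A non-trading type's best payoff in $\sigma$ was $0$, and since on-path prices and off-path responses are unchanged, no deviation becomes profitable now either.

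The main obstacle is the step showing that $p^\sigma(m)$ remains optimal under the updated, ``conditioned-on-trading'' belief --- everything else is bookkeeping. The inequality above handles this cleanly because eliminating types $t \in T_m$ with $v(t) < p^\sigma(m)$ removes exactly those types that contributed zero revenue at $p^\sigma(m)$ but could have contributed positive revenue at a lower price; conditioning therefore weakly raises the relative attractiveness of $p^\sigma(m)$ versus any price above it and (weakly) lowers the attractiveness of any price below it. Combining all three steps yields an efficient equilibrium $\tilde\sigma$ with identical interim payoffs, completing the proof.
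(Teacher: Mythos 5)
Your proposal is correct and follows essentially the same route as the paper's proof: the identical construction (non-trading types switch to $\{t\}$, trading types and all prices unchanged), with the key step being that conditioning the pool belief on the trading subset rescales demand above $p^\sigma(m)$ uniformly, so the original optimality inequality carries over. The paper phrases this rescaling via Bayes' rule on the truncated CDF rather than dividing the measure inequality by $\mu(T_m^+)$, but the argument is the same.
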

\begin{proof}\renewcommand{\qedsymbol}{}
	Consider an equilibrium $\sigma$. Define a strategy profile $\tilde\sigma$ in which 
\begin{align*}
m^{\tilde\sigma}(t) &= \begin{cases}
                       m^\sigma(t) & \text{if }v(t)\geq p^{\sigma}(m^\sigma(t))\text{,} \\
                       \{t\} & \text{otherwise,}
                    \end{cases}\\
                   p^{\tilde\sigma}(m)&=p^\sigma(m). 
                    \end{align*}
In this disclosure strategy profile, a consumer-type that doesn't buy in equilibrium $\sigma$ is fully revealing herself in $\tilde\sigma$. Because $\sigma$ is an equilibrium, and the pricing strategy remains unchanged, such a type purchases in $\tilde\sigma$ at price $v$, and thus, efficiency is guaranteed without a change in payoffs. 

We argue that $\tilde\sigma$ is an equilibrium. Note that because $\sigma$ is an equilibrium, and we have not changed the price for any message, no consumer-type has a motive to deviate. We also argue that the monopolist has no incentive to change prices. Because $p^\sigma(m)$ is an optimal price for the firm to charge in the equilibrium $\sigma$ when receiving message $m$,
\begin{align}\label{Inequality-MonopolistPriceEfficiency}
p^\sigma(m)(1-F^\sigma_m(p^\sigma(m)))\geq p(1-F^\sigma_m(p))\text{ for every }p.
\end{align}
After receiving message $m$ in $\tilde\sigma$, the monopolist's payoff from setting a price of $p^{\tilde\sigma}(m)$ is $p^{\tilde\sigma}(m)=p^{\sigma}(m)$ (because that price is accepted for sure), and the payoff from setting a higher price is $p(1-F^{\tilde\sigma}_m(p))$. But observe that by Baye's Rule, for every $p\geq p^{\tilde\sigma}(m)$, 
\begin{align*}
	1-F^{\tilde\sigma}_m(p)=\frac{1-F^{\sigma}_m(p)}{1-F^\sigma_m(p^\sigma(m))}.
\end{align*}
Thus \eqref{Inequality-MonopolistPriceEfficiency} implies that $p^{\tilde\sigma}(m)\geq  p(1-F^{\tilde\sigma}_m(p))$ for every $p>p^{\tilde\sigma}(m)$, and clearly the monopolist has no incentive to reduce prices below $p^{\tilde\sigma}(m)$. Therefore, the monopolist has no motive to deviate.
\end{proof}
\begin{lemma}[Partitional Lemma]\label{Lemma-Partitional}
For every efficient equilibrium $\sigma$, there exists a partitional equilibrium $\tilde\sigma$ that results in the same payoff for almost every type.
\end{lemma}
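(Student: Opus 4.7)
The plan is to construct a partition $\partition$ directly from $\sigma$ and show that the induced partitional strategy profile $\tilde\sigma$ is an equilibrium payoff-equivalent to $\sigma$ for almost every type. Write $p^*(t) := p^\sigma(m^\sigma(t))$ for the equilibrium price faced by type $t$ in $\sigma$; by efficiency $p^*(t) \leq t$, and the consumer's interim payoff is $t - p^*(t)$.

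The first step is a monotonicity lemma: after modifying $\sigma$ on a $\mu$-null set, $p^*$ is weakly non-decreasing in $t$. The argument exploits that each type sends the on-path message with the lowest price among those she can feasibly send. Suppose $t_1 < t_2$ with $p^*(t_1) > p^*(t_2)$, and let $m' := m^\sigma(t_2)$, so that $t_2 \in m'$ and $p^\sigma(m') = p^*(t_2)$. If $t_1 \in m'$, then $t_1$ can feasibly deviate to $m'$ and pay the strictly lower price $p^*(t_2)$, contradicting sequential rationality. Hence $t_1 < \min m'$. Combined with the firm's optimality condition---which under efficient trade forces $p^\sigma(m) = \min T_m$, where $T_m$ is the set of types sending $m$---this constrains the geometry of on-path messages enough that the level sets $T_p := \{t : p^*(t) = p\}$ are, up to null sets, ordered intervals tiling $[\lowv, \highv]$, with $\sup T_p \leq \inf T_{p'}$ whenever $p < p'$.

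Given this monotonic interval structure, I enumerate the on-path prices $p_1 < p_2 < \ldots$ (countable, since the $T_{p_k}$ are disjoint and each has positive measure under $\mu$) and set $c_0 := \lowv$, $c_k := \sup T_{p_k}$, and $P_k := [c_{k-1}, c_k]$. Efficiency and firm optimality together force $p_k = \min T_{p_k} = c_{k-1}$: otherwise the firm could raise the price within $T_{p_k}$ without losing any trade. Define $\tilde\sigma$ by: every type $t \in P_k$ reports $P_k$ (with boundary types breaking ties toward the lower-indexed cell); the firm charges $\min P_k = c_{k-1} = p_k$ on equilibrium-path messages; and off-path beliefs place probability one on the highest valuation compatible with the off-path message, inducing a deterring price equal to that valuation, exactly as in the proof of \Cref{Proposition-MonopolistConsumerSegmentation}.

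It remains to verify that $\tilde\sigma$ is an equilibrium and payoff-equivalent to $\sigma$. Atomlessness of $F$ ensures the belief induced by $P_k$ in $\tilde\sigma$ agrees with the belief induced by the $\sigma$-messages priced at $p_k$ up to a null boundary set, so charging $p_k$ remains firm-optimal; each type $t \in P_k$ pays $c_{k-1} = p_k = p^*(t)$, matching her $\sigma$-payoff; on-path deviations to a neighboring cell arise only at null boundaries and are resolved by tie-breaking; off-path deviations are deterred by skeptical beliefs. The main obstacle is the monotonicity step, because rich-evidence messages are intervals and a lower type may not be able to access a higher type's message when that message's lower endpoint exceeds $t_1$. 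The consumer's IC therefore gives only a one-sided implication, and one must combine it carefully with firm optimality under efficient trade to exclude ``non-monotonic islands'' in which higher-valued types pay lower prices than their neighbors.
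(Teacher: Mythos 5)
Your architecture is the same as the paper's: show that the level sets of the equilibrium price function are ordered intervals, pool each level set into one cell, and verify the resulting partitional profile. However, three points need attention. First, your monotonicity step is left as an acknowledged ``obstacle'' when in fact the ingredients you already list close it in one line: if $t_1<t_2$ and $p^*(t_1)>p^*(t_2)$, then $t_1\notin m'=m^\sigma(t_2)$ forces $t_1<\min m'$, and since $T_{m'}\subseteq m'$ you get $p^*(t_1)\leq t_1<\min m'\leq \inf T_{m'}=p^\sigma(m')=p^*(t_2)$, contradicting the hypothesis. So monotonicity holds for \emph{every} pair of types, not merely almost everywhere, and there are no ``non-monotonic islands'' to exclude; you should state this chain rather than defer it. (The paper's Step 1 proves the equivalent statement that each price level set is connected, using the same feasibility-of-deviation logic in both directions.)

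Second, your enumeration $p_1<p_2<\ldots$ rests on the claim that each level set $T_{p}$ has positive $\mu$-measure, which is false: an efficient equilibrium can have types fully revealing themselves (e.g., the fully revealing equilibrium, or the types added by the Efficiency Lemma), in which case $T_{p}$ is a singleton and the set of on-path prices is uncountable. The recursive definition $c_k:=\sup T_{p_k}$, $P_k:=[c_{k-1},c_k]$ then has no meaning. The fix is to index cells by price directly, setting $P_p:=cl(T_p)$ for each on-path price $p$ and checking these closures overlap in at most singletons, as the paper does. Third, when you verify that $p_k$ remains firm-optimal on the pooled message, note that several distinct $\sigma$-messages may induce the same price $p_k$ but different posteriors; the pooled posterior is their mixture, and you need the observation that the seller's profit is linear in the belief to conclude that a price optimal against each component is optimal against the mixture. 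None of these gaps is fatal, but as written the construction does not go through for general efficient equilibria.
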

\begin{proof}\renewcommand{\qedsymbol}{}
In an efficient equilibrium $\sigma$, trade occurs with probability $1$. Therefore, for every equilibrium-path message, $m$, the price charged by the monopolist after that message, $p^\sigma(m)$, must be no more than the lowest type in the support of his beliefs after receiving message $m$, $\underline{t}^\sigma(m)$ (recall that $v(t)=t$). Sequential rationality of the monopolist demands that $p^\sigma(m)$ is at least $\underline{t}^\sigma(m)$ (because charging strictly below can always be improved), and therefore, in an efficient equilibrium, $p^\sigma(m)=\underline{t}^\sigma(m)$.\smallskip

\noindent\underline{Step 1}: We first prove that the set of types being charged an equilibrium-path price $p$ is a connected set. Suppose that types $t$ and $t''>t$ are sending (possibly distinct) equilibrium-path messages $m$ and $m''$ such that $p^\sigma(m)=p^\sigma(m'')$. Because $p^\sigma(m)=\underline{t}^\sigma(m)$ and $p^\sigma(m'')=\underline{t}^\sigma(m'')$, it follows that $\underline{t}^\sigma(m)=\underline{v}^\sigma(m'')<v<v''$. Because types arbitrarily close to $\underline{t}^\sigma(m'')$ and $v''$ are both sending the message $m''$, the message $m''$ contains the interval $[\underline{t}^\sigma(m''),t'']$. 

Consider any type $t'$ in $[t,t'']$: because $[t,t'']\subseteq [\underline{t}^\sigma(m''),v'']\subseteq m''$, it follows that $m''$ is a \emph{feasible message} for type $t'$. Therefore, denoting $m'$ as the equilibrium-path message of type $t'$, type $t'$ does not have a profitable deviation to sending message $m''$ only if $p^\sigma(m')\leq p^\sigma(m)$. 

We argue that this weak inequality holds as an equality. Suppose towards a contradiction that $p^\sigma(m')< p^\sigma(m)$. Then it follows from $p^\sigma(m')=\underline{t}^\sigma(m')$ that $\underline{t}^\sigma(m')<\underline{t}^\sigma(m)\leq t\leq t'$. Therefore, the interval $[\underline{t}^\sigma(m'),t']$ is both a subset of $m'$ and contains $t$, and hence, $m'$ is a feasible message for type $t$. But then, type $t$ has an incentive to deviate from her equilibrium-path message $m$ to $m'$, which is a contradiction. \smallskip

\noindent\underline{Step 2}: For every equilibrium-path price $p$, let 
\begin{align*}
	M^\sigma(p)&\equiv\{m\in\messages:p^\sigma(m)=p\text{ and }m\text{ is an equilibrium-path message}\},\\
	T^\sigma(p)&\equiv\{t:p^\sigma(m^\sigma(t))=p\}.
\end{align*}
Observe that for every message $m$ in $M^\sigma(p)$, the monopolist's optimal price is $p$. Because the monopolist's payoff from charging any price is linear in his beliefs, and the belief induced by knowing that the type is in $T^\sigma(p)$ is a convex combination of beliefs in the set $\bigcup_{m\in M^\sigma(p)} \{F^\sigma(m)\}$, it follows that the monopolist's optimal price remains $p$ when all he knows is that the type is in $T^\sigma(p)$. 

Now consider the collection of sets
\begin{align*}
	\partition^\sigma \equiv\{m\in \messages:m=cl(T^\sigma(p))\text{ for some equilibrium-path price }p\},
\end{align*}
where $cl(\cdot)$ is the closure of a set. We argue that $\partition^\sigma$ is a partition of $[\underline{v},\overline{v}]$: clearly, $[\underline{v},\overline{v}]\subseteq \bigcup_{m\in \partition^\sigma} m$, and because each of $T^\sigma(p)$ and $T^\sigma(p')$ are connected for equilibrium-path prices $p$ and $p'$, $cl(T^\sigma(p))\bigcap cl(T^\sigma(p'))$ is at most a singleton. 

Consider a strategy-profile $\tilde\sigma$ where each type $t$ sends the message $m^{\partition^\sigma}(t)$. Fix such a message $m$ generated by $\tilde\sigma$; there exists a price $p$ that is on the equilibrium path (in the equilibrium $\sigma$) such that $m=cl(T^\sigma(p))$. Because the prior is atomless, the monopolist's optimal price when receiving message $m$ in $\tilde\sigma$ is equivalent to setting the optimal price when knowing that the type is in $T^\sigma(p)$, which as established above, is $p$. If any other message $m=[a,b]$ is reported, the monopolist believes that the consumer's type is $b$ with probability $1$. 

We argue that this is an equilibrium. We first consider deviations to other messages that are equilibrium-path for $\tilde\sigma$. For any type $t$ such that there exists a unique element in $\partition^\sigma$ that contains $t$, there exists no other feasible message that is an equilibrium-path message for $\tilde\sigma$. For any other type $t$, the strategy of sending the message $m^{\partition^\sigma}(t)$ ensures that type $t$ is sending the equilibrium-path message that induces the lower price. Finally, no type gains from sending an off-path message. Observe that all but a measure-$0$ set of types are charged the same price in $\tilde\sigma$ as they are in $\sigma$. 
\end{proof}\vspace{-0.35in}\end{proof}

\begin{proof}[Proof of \Cref{Proposition-GreedyOptimalPower} on p. \pageref{Proposition-GreedyOptimalPower}]
Since all partitional equilibria involve trade with probability $1$, a partitional equilibrium $\sigma$ has higher ex ante consumer welfare than the partitional equilibrium $\tilde\sigma$ if the average price in $\sigma$ is lower than that in $\tilde\sigma$: 
\begin{align*}
	\int_0^1 p^\sigma(m^\sigma(t))dt\leq 	\int_0^1 p^{\tilde\sigma}(m^{\tilde\sigma}(t))dt. 
\end{align*}
Thus, it suffices to prove that the greedy segmentation attains the lowest average price attainable by any partitional equilibrium.

We first describe the greedy segmentation. For a truncation of valuations $[0,v]$ where $v\leq 1$, let $p(v)$ solve $pf(p)=F(v)-F(p)$, which implies that $p({v})=\frac{{v}}{\sqrt[\leftroot{-2}\uproot{2}k]{k+1}}$; let us denote the denominator of $p(v)$ by $\gamma$, and note that $\gamma>1$. The greedy segmentation divides the $[0,1]$ interval into sets of the form $\{0\}\bigcup_{\ell=0}^\infty S_\ell$ where $S_\ell\equiv\left[\frac{1}{\gamma^{\ell+1}},\frac{1}{\gamma^\ell}\right]$. 

We prove that no partitional equilibrium generates a lower average price on the segment $S_\ell$ than $\frac{1}{\gamma^{\ell+1}}$. Consider an arbitrary $\ell\geq 0$. Consider dividing $S_\ell$ into two segments $\left[\frac{1}{\gamma^{\ell+1}},\tilde{v}\right]$ and $\left(\tilde v,\frac{1}{\gamma^{\ell}}\right]$ for some $\tilde v\in \left(\frac{1}{\gamma^{\ell+1}},\frac{1}{\gamma^{\ell}}\right)$. The higher segment is charged $\tilde v$. The lowest possible price that the lower segment is charged is $\frac{\tilde v}{\gamma}$, which is achieved if all types in $\left[\frac{\tilde v}{\gamma},\tilde v\right]$ send the same message. The resulting average price in the segment $S_\ell$ is 
\begin{align*}
	\bar{P}(\tilde{v})\equiv&(F(\tilde{v})-F(1/\gamma^{\ell+1}))\frac{\tilde v}{\gamma}+(F(1/\gamma^{\ell})-F(\tilde{v}))\tilde v\\
	=&(\tilde{v}^k-\gamma^{-k(\ell+1)})\frac{\tilde v}{\gamma}+(\gamma^{-k\ell}-\tilde{v}^k)\tilde{v}
\end{align*}
where the first equality substitutes $F(v)=v^k$. Taking derivatives,
\begin{align*}
    \frac{d^2\bar P}{d\tilde{v}^2}=(k+1)k\tilde{v}^{k-1}\left(\frac{1}{\gamma}-1\right)<0
\end{align*}
where the inequality follows from $\gamma>1$. Therefore, $\bar{P}$ is concave in $\tilde{v}$. The boundary condition that $\bar{P}(\gamma^{-\ell})=\bar{P}(\gamma^{-(\ell+1)})=\gamma^{-(\ell+1)}$ coupled with concavity of $\bar{P}$ implies that $\bar{P}(\tilde{v}) \geq \gamma^{-(\ell+1)}$ for every $\tilde{v}\in \left(\frac{1}{\gamma^{\ell+1}},\frac{1}{\gamma^{\ell}}\right)$. Therefore, no partitional equilibrium generates a lower average price than $\gamma^{-(\ell+1)}$ for the set of types in $S_\ell$. Because the greedy segmentation attains this lowerbound pointwise on every interval $S_\ell$ for every $\ell$, it is the consumer-optimal partitional equilibrium.
\end{proof}

\begin{proof}[Proof of \Cref{Proposition-DuopolistSimpleUnraveling} on p. \pageref{Proposition-DuopolistSimpleUnraveling}]
    Given a message $M$, let $\tau(i,M)\equiv\arg\min_{t\in M}|t-\ell_i|$ denote the closest type in $M$ to seller $i$; this type is well-defined because $M$ is closed. Let $\delta_t$ denote the degenerate probability distribution that places probability $1$ on type $t$. We use this notation to construct a fully revealing equilibrium:
    \begin{itemize}[noitemsep]
        \item The consumer of type $t$ always sends message $\{t\}$. 
        \item If seller $i$ receives message $M$, his beliefs are $\delta_{\tau(i,M)}$ and that the other seller has received a fully revealing message.
        \item If seller $i$ holds belief $\delta_{\tau(i,M)}$, he charges a price $p_i(M)=\max\{2\tau(i,M)\ell_i,0\}$.
        \item If $V-p_i-|t-\ell_i|>V-p_j-|t-\ell_j|$ and $V-p_i-|t-\ell_i|\geq 0$, then the consumer purchases from firm $i$. 
        \item If $V-p_L-|t-\ell_L|=V-p_R-|t-\ell_R|\geq 0$, the consumer purchases from firm $L$ if and only if $t\leq 0$, and otherwise, the consumer purchases from firm $R$.
    \end{itemize}
We argue that this is an equilibrium. Observe that each seller's on-path beliefs are consistent with Bayes rule, since $t=\tau(i,\{t\})$. In the case of an off-path message $M$, Bayes rule does not restrict the set of possible beliefs, and therefore, the above off-path belief assessment is feasible. 

To see that each firm does not wish to deviate from charging the above prices, suppose that firm $i$ receives message $M$. He believes with probability $1$ (on or off-path) that the consumer's type is $\tau(i,M)$ with probability $1$ and that the other firm $j$ has received a message $\{\tau(i,M)\}$. Denote this type by $t$. If $2t\ell_i> 0$ then the consumer is closer in location to firm $i$ and therefore $2t\ell_j < 0$. In this case, firm $i$ believes that firm $j$ is charging a price of $0$. Charging a price strictly higher than $2t\ell_i$ leads to a payoff of $0$ (because the consumer will reject such an offer and purchase instead from the other firm), and charging a price $p$ weakly below $2t\ell_i$ leads to a payoff of $p$ (because the consumer always breaks ties in favor of the closer firm). Therefore, firm $i$ has no incentive to deviate. If $2t\ell_i\leq 0$, then the consumer is located closer to the other firm $j$ and is being charged a price equal to $2t\ell_j$. In this case, charging any strictly positive price leads to a payoff of $0$ (because the consumer will purchase the good from the other firm). Therefore, in either case, firm $i$ has no incentive to deviate. 

Finally, we argue that the consumer has no incentive to deviate. By sending a fully revealing message, $\{t\}$, the consumer obtains an equilibrium payoff of $V-(|t|+1)$. If $t\leq 0$, the consumer obtains a price of $0$ from firm $R$, which is the lowest possible price. Therefore, there is no incentive to send any other message to firm $R$. Sending any other message $M\in M(t)$ to firm $L$ induces a weakly higher price because for any feasible message $M\in M(t)$, $\tau(L,M)\leq t$, and therefore, $2\tau(L,M)\ell_L\geq 2t\ell_L$. Thus, the consumer has no strictly profitable deviation from sending any other message $M\in M(t)$ to firm $L$. An analogous argument implies that if the consumer's type is $t>0$, she also does not gain from deviating to any other feasible disclosure strategy. 

\end{proof}

\begin{proof}[Proof of \Cref{Proposition-DuopolistSimpleExistence} on p. \pageref{Proposition-DuopolistSimpleExistence}]
We first show given the pricing strategies that the consumer has no incentive to deviate. 

Consider a consumer type $t$ such that $t\in (t_1^L,t_1^R)$, or in other words, $t\ell_i<t_1^i\ell_i$. The equilibrium strategies are that the consumer sends the message $\{t\}$ to each firm. Given these equilibrium strategies, the consumer is quoted a price of $\max\{2t\ell_i,0\}$ by firm $i$. If the consumer deviates and sends message $[-1,1]$ to firm $i$, she induces a price of $p_1^i=2t_1^i\ell_i$, which is strictly higher. Therefore, this deviation is not strictly profitable. 

Now suppose that $t\ell_i\geq t_1^i\ell_i $. The equilibrium strategies are that the consumer sends message $[-1,1]$ to firm $i$ and $\{t\}$ to firm $j$. Because the consumer, in equilibrium, is quoted a price of $0$ by firm $j$, sending the other message cannot lower that price. Given the equilibrium message, the consumer is quoted a price of $2t_1^i\ell_i$ by firm $i$, and deviating leads to a weakly higher price of $2t\ell_i$. Therefore, this deviation is not strictly profitable. 

We now consider whether firm $i$ has an incentive to deviate. It follows from the proof of \Cref{Proposition-DuopolistSimpleUnraveling} that the prices are optimal whenever firm $i$ receives an (equilibrium-path) message of $\{t\}$ for $t\in (t_1^L,t_1^R)$. An identical argument applies when firm $i$ receives an (off-path) message of $\{t\}$ for $t\ell_i\geq t_1^i\ell_i $: in this case, firm $i$ believes that firm $j$ is charging a price of $0$, and thus, the optimal price is $2t\ell_i$ (because the consumer always breaks ties in favor of firm $i$). When firm $i$ receives an (equilibrium-path) message of $\{t\}$ for $t\ell_j\geq t_1^j\ell_j $, firm $i$ believes that firm $j$ is charging a price of $2t_1^j$. The equilibrium prescribes that firm $i$ charges a price of $0$, which leads to a payoff of $0$ (because the consumer breaks ties in favor of firm $j$), and any strictly positive price also leads to a payoff of $0$. Finally, consider the case when firm $i$ receives an (equilibrium-path) message of $[-1,1]$. Firm $i$ infers that $t\ell_i\geq t_1^i\ell_i$ and believes that firm $j$ is charging a price of $0$. Because $p_1^i$ is, by definition, a profit-maximizing price in response to a price of $0$, firm $i$ has no strictly profitable deviation.
\end{proof}

\begin{proof}[Proof of \Cref{Proposition-DuopolistRichExistence} on p. \pageref{Proposition-DuopolistRichExistence}]
We use an off-path belief system where if firm $i$ receives an off-path message $M$, she holds degenerate beliefs $\delta_{\tau(i,M)}$ that put probability $1$ on type $\tau(i,M)$ where recall that $\tau(i,M)$ is defined as the type in $M$ that is located closest to firm $i$ (this was defined in the proof of \Cref{Proposition-DuopolistSimpleUnraveling}). Given such beliefs, the firm charges a price $p_i(M)=\max\{2\tau(i,M)\ell_i,0\}$ for an off-path message $M$. 

First, we prove that given the pricing strategies, no consumer has an incentive to deviate. Consider a consumer type $t$ such that $t^i_s\ell_i<t\ell_i\leq t^i_{s-1}\ell_i$ for some $s=1,2,\ldots$. Such a consumer should be sending message $M^i_s$ to both firms. Such a message induces a price of $0$ from firm $j$ and $p_s^i=2t_s^i\ell_i$ from firm $i$. No message can induce a lower price from firm $j$. Therefore, any strictly profitable deviation must induce a strictly lower price from firm $i$. We show that this is not possible. 

We first argue that the consumer does not have a profitable deviation to any other equilibrium-path message. Suppose that $t\ell_i< t^i_{s-1}\ell_i$. In this case, $M^i_s$ is the only equilibrium-path message that type $t$ can send to firm $i$. If $t\ell_i= t^i_{s-1}\ell_i$, then type $t$ can send either message $M^i_s$ or $M^i_{s-1}$ but because $p_s^i\leq p_{s-1}^i$, this is not a strictly profitable deviation. 

We now argue that the consumer does not have a profitable deviation to any off-path message. Any feasible message $M\in \messages(t)$ satisfies the property that the closest type in $M$ to firm $i$ is at least as close as $t$ to firm $i$; or formally: $t\ell_i\leq \tau(i,M)\ell_i$. In that case, the price that the consumer is charged is $2\tau(i,M)\ell_i\geq 2t\ell_i>2t^i_s\ell_i=p_s^i$. Therefore, this deviation is not strictly profitable.  

Finally, we argue that the firms have no incentive to deviate in their pricing strategies. For any equilibrium-path message, the prices charged by firms are (by construction) equilibrium prices. For any off-path message $M$, each firm assumes that the consumer sent the equilibrium-path message to the other firm. If $\tau(i,M)\ell_i>0$ then firm $i$ assumes that firm $j$ is charging a price of $0$, and then charging a price of $2\tau(i,M)\ell_i$ is a best-response (assuming that the consumer breaks ties in favor of the closer firm). If $\tau(i,M)\ell_i\leq 0$, then firm $i$ believes that the consumer is being charged a price $p^j_s$ by firm $j$ for some $s$ where $t^i_s\ell_i<\tau(i,M)\ell_i\leq t^i_{s-1}\ell_i$. Because the consumer breaks ties in favor of the closer firm, firm $i$ anticipates that the consumer will reject any strictly positive price.  
\end{proof}

\begin{proof}[Proof of \Cref{Proposition-DuopolyWelfare} on p. \pageref{Proposition-DuopolyWelfare}]
We show that $p^*>p^i_1$ for every $i$. Observe that
\begin{align*}
    p^L_1=\frac{2F(-p^L_1/2)}{f(-p^L_1/2)}<\frac{2F(0)}{f(0)}=p^*
\end{align*}
where the first equality follows from the first-order condition that $p^L_1$ solves, the inequality follows from $F$ being strictly log-concave, and the second equality follows from the definition of $p^*$. A symmetric argument shows that $p^R_1<p^*$. 

Now we prove that all consumers are better off in the equilibrium we construct in the game with simple evidence (\Cref{Proposition-DuopolistSimpleExistence}). All types where $t\ell_i\geq t_1^i\ell_i$ are buying the good at a lower price because $p^i_1<p^*$. Consider any other type, i.e., where $t\ell_i <t_1^i\ell_i$ for every $i\in \{L,R\}$. Suppose that $t\ell_i > 0$. That type in equilibrium buys the good from firm $i$ at the price $2t\ell_i<2t_1^i\ell_i$, which equals $p^i_1$. Therefore, it obtains the good at a lower price than $p^*$. Finally, if $t=0$, that type obtains the good at a price equal to $0$. 

An analogous argument ranks prices relative to the equilibrium constructed in the game with rich evidence (\Cref{Proposition-DuopolistRichExistence}). All types in that equilibrium pay a price that is less than $p_1^i$, and therefore, buy the good at a price lower than $p^*$. 
\end{proof}

\begin{proof}[Proof of \Cref{Proposition-OligopolyRichExistence} on p. \pageref{Proposition-OligopolyRichExistence}]
We begin by proving that each $M^{ij}_s$ is convex for each $s$. If $t$ and $t'$ are each in $M^{ij}_s$, then $t_i \geq t_j \geq \max_{k\in N\backslash\{i,j\}}t_k$, and $t_i' \geq t_j' \geq \max_{k\in N\backslash\{i,j\}}t_k'$. Therefore, for every $\rho\in (0,1)$, the following are true:
\begin{align*}
    \rho t_i+(1-\rho)t_i'\geq \rho t_j+(1-\rho)t_j'&\geq \rho \max_{k\in N\backslash\{i,j\}}t_k+(1-\rho)\max_{k\in N\backslash\{i,j\}}t_k'\geq \max_{k\in N\backslash\{i,j\}} (\rho t_k+(1-\rho) t_k')\\
    \rho t_i+(1-\rho)t_i'-\rho t_j+(1-\rho)t_j'&= \rho(t_i-t_j)+(1-\rho)(t_i-t_j) \leq \rho p^{ij}_{s-1}+(1-\rho)p^{ij}_{s-1}=p^{ij}_{s-1}.
\end{align*}
Therefore, $\rho t+(1-\rho)t'$ is also an element of $M^{ij}_{s}$, and so $M^{ij}_s$ is convex for every $s$. Clearly, $M^{ij}_\infty$ is also convex by the same argument. 

The remainder of this argument follows the same logic as that of \Cref{Proposition-DuopolistRichExistence}. If firm $i$ receives an off-path message $M$, she holds degenerate beliefs $\delta_{\tau(i,M)}$ where type $\tau(i,M)$ is the type in $M$ that has the highest value of $t_i-\max_{j\neq i} t_j$. Because $M$ is a closed set, $\tau(i,M)$ is defined for every $M$. Given such beliefs, the firm charges a price $p_i(M)=\max\{t_i-\max_{j\neq i} t_j,0\}$. 

First, we prove that given the pricing strategies, no consumer type has an incentive to deviate. Consider a consumer type such that $t\in M^{ij}_s$ and $t_i-t_j>p^{ij}_s$. A message of $M^{ij}_s$ to each firm induces all firms other than firm $i$ to set a price of $0$ and induces firm $i$ to set a price of $p^{ij}_s$. No other message can lead to lower prices from any other firm. Moreover, the consumer cannot send any other equilibrium path message to firm $i$ that leads to lower prices. Finally, every off-path message sent to firm $i$ can only increase the price since for any off-path message $M\in \messages(t)$, $p_i(M)\geq M^{ij}_s$. 

Second, we argue that firms have no incentives to deviate in their pricing strategies. For any on-path message, the prices charged are (by construction) optimal. For any off-path message $M$, each firm assumes that the consumer sent the equilibrium path message to other firms. If Firm $i=\alpha(\tau(i,M))$ (i.e., Firm $i$ is the favorite), then it assumes that other firms are charging a price of $0$, in which case $p_i(M)$ is optimal. If Firm $i$ is not the favorite, then it charges a price of $0$, and anticipates any strictly positive price to be rejected. 
\end{proof}

\begin{proof}[Proof of \Cref{Proposition-OligopolyWelfare} on p. \pageref{Proposition-OligopolyWelfare}]
Because $f$ is log-concave, it follows directly from \cite{caplin1991aggregation} that for each firm $i$, $Q^i(p)$ is log-concave. Since firms are symmetric, we drop the subscripts. Let $p^*$ denote the price from the symmetric equilibrium of the game without personalized pricing. Observe that for each firm $i$,
\begin{align*}
    p^i_1 = -\frac{Q(p^i_1)}{q(p^i_1)}<-\frac{Q(0)}{q(0)}=p^*,
\end{align*}
where the first equality follows from the first-order condition that $p^i_1$ solves, the inequality follows from $Q$ being strictly log-concave, and the second equality follows from the first-order condition that $p^*$ solves. Now all consumers are necessarily better off in the equilibrium constructed with simple evidence. The argument is the same as in \Cref{Proposition-DuopolyWelfare}: all consumer types either pay a price of $p^i_1$ or below. In the equilibrium constructed with rich evidence, it follows from symmetry that $p^{ij}_1 = p^{i}_1$, and therefore, once again, all consumer types either pay a price of $p^i_1$ or below. 
\end{proof}

\end{document}